\newtheorem{Theorem}{Theorem}[section]
\newtheorem{Lemma}[Theorem]{Lemma}
\theoremstyle{definition}
\newtheorem{Definition}[Theorem]{Definition}
\theoremstyle{remark}
\newtheorem{Remark}[Theorem]{Remark}
\numberwithin{equation}{section}
\numberwithin{figure}{section}
\numberwithin{table}{section}
\newcommand{\EE}{\operatorname{E}}
\newcommand{\PP}{\operatorname{P}}
\newcommand{\opt}{\mathrm{opt}}
\newcommand{\wvec}{\bm{w}}
\newcommand{\Qvec}{\bm{Q}}
\newcommand{\thetavec}{\bm{\theta}}
\newcommand{\Psivec}{\bm{\Psi}}
\newcommand{\ellvec}{\bm{\ell}}
\begin{document}

\title{MAX-FACTOR INDIVIDUAL RISK MODELS WITH APPLICATION TO CREDIT PORTFOLIOS}
\author{{\Large M}ICHEL {\Large D}ENUIT, {\Large A}NNA {\Large K}IRILIOUK$^\star$, {\Large J}OHAN {\Large S}EGERS\\
Institut de Statistique, Biostatistique et Sciences Actuarielles\\
Universit\'e catholique de Louvain\\
Louvain-la-Neuve, Belgium
\\[3mm]
$^\star$ Corresponding author: \texttt{anna.kiriliouk@uclouvain.be}}
\maketitle
\begin{abstract}
Individual risk models need to capture possible correlations as failing to do so typically results in an underestimation
of extreme quantiles of the aggregate loss. Such dependence modelling is particularly important for managing credit
risk, for instance, where joint defaults are a major cause of concern. Often, the dependence between the individual loss occurrence indicators is driven by a small
number of unobservable factors. 
Conditional loss probabilities are then expressed as monotone functions of linear combinations of these hidden factors.
However, combining the factors in a linear way allows for some compensation between them. Such diversification effects
are not always desirable and this is why the present work proposes a new model replacing linear combinations
with maxima. These max-factor models give more insight into which of the factors is dominant. 
\\[3mm]
\textit{Key words and phrases:} calibration, default indicator, dependence modelling, latent factors, loss occurrence.
\end{abstract}

\section{Introduction and motivation}

Individual risks are often exposed to the same environment and this induces some dependence that leads to bias in calculations
of  stop-loss premiums and other risk measures. 
There are many situations in practice where dependence affects occurrences of losses. 
Typical cases arise for policies covering natural disasters (hurricane, tornado, flood, etc.). 
We refer the reader to the book of \citet{denuit2006} for an introduction to the modelling
of dependence and to the review paper of \citet{anastasiadis2012} for
an overview of the various multivariate insurance models suggested in the literature.

In this paper, we model the occurrence of losses at the individual level. Recall that
portfolios of risks are generally described by means of either a bottom-up approach or
a top-down approach. In insurance, these two approaches are referred to as
the individual and the collective models of risk theory. The bottom-up approach is also
known as a name-per-name approach in the credit risk literature. It starts from a description of the
individual risks from which the distribution of the aggregate loss is derived. The bottom-up approach has some
clear advantages over the top-down approach, such as the possibility to easily
account for heterogeneity. 

In credit risk models, default indicators can in general not be considered as being mutually independent.
Dependence between the defaults  of different firms can be caused by direct links
between them (e.g., one firm is the other's largest customer) or by more
indirect links. In the latter category, we find industrial firms using the
same resources, and thus exposed to the same price shocks,
or selling on the same markets, and thus tributary of the same demand and subject to the same regulation.

A number of macroeconomic factors may influence many default indicators at once; examples include
including business cycles, level of unemployment, or shifts in monetary policy.
To account for these situations, vectors default indicators are often modelled via common mixture models.
The idea is that there exists a limited number of systematic factors  such that the
default indicators are conditionally independent when the factors are controlled.
Unconditionally, however, the default indicators are dependent because they are subject
to the same unobservable macroeconomic factors.
These factor models are among the few models that can replicate a realistic
correlated default behavior while dramatically reducing the numerical complexity
when computing the distribution of the aggregate portfolio loss.

In general, conditional default probabilities
are functions of linear combinations of the hidden factors, with weights
reflecting the relative sensitivity to the  risk factor.
This is the case for the majority of industry models, including the CreditRisk$^+$ and KMV models. 
We refer the reader to \citet{bluhm2002} for a general introduction. 
The hidden factors are typically associated to different levels of the economy in a hierarchical
way, accounting for global effects and sector-specific ones.
Replacing linear combinations of hidden factors with maxima
is attractive in some applications. The max-decomposition better accounts for shocks
specific to a given category of risks, whereas linear combinations of factors tend to
dilute the shock within the contributions of each factor to the sum. 

The remainder of this paper is organized as follows. In Section~\ref{sec:maxfactor}, we describe the proposed max-factor
specification to induce dependence between loss indicators. Section~\ref{sec:calibration} is devoted to calibration techniques. First, we describe the general setup and introduce some parametric factor models.
Second, we propose efficient numerical procedures to obtain the maximum likelihood estimates for max-factor models. In Section~\ref{sec:nonpar}, new nonparametric estimators are proposed that can be used
as benchmark to evaluate the goodness-of-fit of parametric risk models.
A simulation study assessing the performances of the estimators is given in Appendix B, whereas formal proofs
of the results proposed in Sections~\ref{sec:calibration} and \ref{sec:nonpar} can be found in Appendix A.
In Section~\ref{sec:numeric}, we work out a detailed numerical illustration performed on a classical credit risk data set provided in \citet{sp2001}. Finally, Section~\ref{sec:discuss} briefly discusses the results obtained in this paper and concludes.

\section{Max-factor risk model}\label{sec:maxfactor}

Consider a portfolio of $m$ risks split into $k$ categories observed over a given reference period. Each category, $r$, contains $m_r$ individual risks, $r=1,\ldots,k$. The indicator $Y_{r,i}$ is equal to 1 if
risk $i$ from category $r$ brings some financial loss and to 0 otherwise.
The random variables $Y_{r,i}$ may be associated to a borrower's default in credit risk,
to a policyholder's death in life insurance, or to the occurrence of a claim in general insurance, for instance.
Henceforth, we refer to $Y_{r,i}$ as the loss (occurrence) indicator.

As individual contracts are subject to a common environment, loss indicators are impacted
by a number of identical risk factors. The max-factor decomposition accounts for this
positive correlation by means of a global risk factor $\Psi_0$ affecting all the $m$
contracts and category-specific factors $\Psi_1,\ldots,\Psi_k$ whose influence is
restricted to the contracts in the same class. The random variables $\Psi_0,\Psi_1,\ldots,\Psi_k$
are assumed to be independent with common distribution function $F_\Psi$.
All the contracts in the same risk class $r$ share the common random effect $\Psi_r$ but are also subject to a competing global effect
$\Psi_0$ affecting the entire block of business. In homeowners insurance, this global effect may be related to storms
or earthquakes. In life insurance, it typically accounts for the sudden increase in death probabilities due to the
occurrence of pandemics.

Write $\Psivec=(\Psi_0,\Psi_1,\ldots,\Psi_k)$.
Whereas the majority of factor models are based on linear combinations of the hidden risk factors,
here we specify a latent-shock or competing-risk mechanism. Specifically, the conditional loss probability $\PP[Y_{r,i}=1 \mid \Psivec]$ is 
expressed as an increasing function of the latent factor
\begin{equation}
\label{eq:MaxSp}
\max\{\nu_r+\sigma_r\Psi_r,\mu_r+\sigma_r\Psi_0\},
\end{equation}
where the class-specific parameters satisfy $\nu_r,\mu_r\in\mathbb{R}$ and $\sigma_r\geq 0$.
Then, the effect in \eqref{eq:MaxSp} is mapped to the unit interval with the help of the distribution function $F_\Psi$, i.e.,
\begin{equation}
\label{eq:lossprob}
\PP[Y_{r,i}=1 \mid \Psivec]=F_\Psi\bigl(\max\{\nu_r+\sigma_r\Psi_r,\mu_r+\sigma_r\Psi_0\}\bigr).
\end{equation}
There is thus a competition between the class-specific effect, $\nu_r+\sigma_r\Psi_r$, and the global
effect, $\mu_r+\sigma_r\Psi_0$. Only the larger of the two has an impact on the occurrences of losses.
The parameters $\nu_r$ and $\mu_r$ represent the sensitivity of the conditional loss probability to the
class-specific factor $\Psi_r$ and to the global risk factor $\Psi_0$, respectively: the smaller $\mu_r$, the less
sensitive the loss indicators in category $r$ to $\Psi_0$.

Natural candidates for $F_\Psi$ are to be found among the max-stable distributions.
Max-stability ensures that the distribution of the maximum in \eqref{eq:MaxSp} stays in the same family. In this paper,
we consider the Gumbel distribution, but a similar analysis can be carried out with any other max-stable family of distributions.
Recall that the distribution function of the Gumbel (or Fisher-Tippett Extreme Value type 1) distribution
is $x\mapsto\exp\big(-\exp(-\frac{x-m}{s})\big)$ for some $m\in\mathbb{R}$ and $s>0$. We consider it here in
standardized form ($m=0$ and $s=1$) so that
$$
F_\Psi(x)=\exp\bigl(-\exp(-x)\bigr),\qquad x\in\mathbb{R}.
$$
The choice of the Gumbel distribution explains why we have chosen the latent shock to be of the form \eqref{eq:MaxSp}: the maximum in \eqref{eq:MaxSp} is again a Gumbel distributed random variable, due to the fact that the multiplicative coefficient $\sigma_r$ is equal for every element of $\bm{\Psi}$. The smaller the constants $\nu_r$ and $\mu_r$, the less sensitive the contract is to the corresponding factor.

The model we propose is related to similar constructions suggested in the literature,
but applied to different levels.
For instance, in \citet[Example 2.7]{denuit2002} it is suggested, following \citet{cossette2002}, to represent
the loss indicator $Y_{r,i}$ in terms of independent Bernoulli random variables $J_0,J_1,\ldots,J_k$
as
$$
Y_{r,i}=\min\{J_r+J_0,1\}=\max\{J_0,J_r\},\qquad i=1,\ldots,n;
$$
see also \citet{valdez2013}.
In credit risk modelling, time-to-defaults are sometimes assumed to be subject to a competing-risk mechanism \citep{giesecke2003}. Default indicators are then of the form
$$
Y_{r,i}=\mathrm{I}\big[\min\{E_r,E_0\}\leq 1\big]=\max\big\{\mathrm{I}[E_r\leq 1],\mathrm{I}[E_0\leq 1]\big\},
$$
where $E_0,E_1,\ldots,E_k$ are independent, positive random variables. The factor $E_0$ impacting all 
obligors accounts for a systematic shock
threatening the solvency of the entire portfolio.
In the model we propose, the max-factor decomposition affects the conditional loss probability and not the
loss indicators directly. Contrarily to the two models described above, where the occurrence of the
common shock ($J_0$ in the first case, or $\{ E_0\leq 1 \}$ in the second case) leads to the simultaneous occurrence of losses,
the factors $\Psi_0,\ldots,\Psi_k$ only impact the conditional loss probabilities in~\eqref{eq:lossprob}. As long as this conditional probability
stays below unity, there is still room for distinct individual default experiences. In this sense, the max-factor model appears to be more flexible.

The max-factor model can also be seen as a regime-switching construction, where the maximum drives the switch from
standard to severe conditions. Think for instance of life insurance.  The indicator $Y_{r,i}$ is now equal to 1 if individual $i$
from risk class $r$ dies during the year. Modern actuarial calculations recognize the uncertainty surrounding
one-year death probabilities. The max-factor model can account for the occurrence of pandemics increasing
the mortality of the population: $\Psi_0$ is related to the severity of the pandemics and the parameters $\mu_r$ and $\sigma_r$
modulate its consequences for the different risk categories (typically, flu pandemics can have different consequences
depending on age category). There is thus a switch in the mortality regime, from standard to high.

Compared to the classical linear specification, the maximum in \eqref{eq:MaxSp} prohibits any compensation between the global factor, $\Psi_0$, and
the category-specific factors, $\Psi_1,\ldots,\Psi_k$. Indeed, the linear combination $\mu_r+\tau_r\Psi_r+\sigma_r\Psi_0$, where
$\mu_r\in\mathbb{R}$, $\tau_r\geq 0$, $\sigma_r\geq 0$, allows for diversification between $\Psi_0$ and $\Psi_r$:
a large realization for $\Psi_0$ can be compensated by a small realization for $\Psi_r$, leaving the corresponding linear
combination unchanged.
Assume for instance that the global economy is booming, so that $\Psi_0$ is small (default probabilities being increasing
in the linear combination of risk factors). However, firms in some category $r$ may experience severe problems because
of new regulations, embargo, emerging new technologies, etc., so that $\Psi_r$ may be large. The linear combination
somewhat compensates the difficulties specific to category~$r$ with the excellent global conditions. In contrast,
the max-factor specification \eqref{eq:MaxSp} focuses on the worst factor, which is $\Psi_r$ in our example, and recognizes the particular problems
faced by the firms in category $r$. Depending on the kind of application, linear or max-factor decomposition may be
considered to represent the correlation structure of the individual loss indicators.

\section{Calibration of max-factor models}\label{sec:calibration}

\subsection{General setup}

Assume that a portfolio of risks has been observed for $n$ calendar years. Define the indicator variables 
$Y_{r,j,i}$, $r \in \{1,\ldots,k\}$, $j \in \{1,\ldots,n\}$, $i \in \{1,\ldots,m_{r,j} \}$,
where $Y_{r,j,i} = 1$ corresponds to the occurrence of losses for individual $i$ in category $r$ during 
calendar year $j$, while $m_{r,j}$ denotes the number of risks in category $r$ and calendar year $j$. In the credit risk data that we will study in Section~\ref{sec:numeric}, the categories will correspond to the rating classes. 

For fixed $r$ and $j$, the number of risks producing losses is $M_{r,j} = \sum_{i=1}^{m_{r,j}} Y_{r,j,i}$. We
assume that, within a category $r$, individual risks are exchangeable. More specifically, let
$\bm{Q}_j = (Q_{1,j},\ldots,Q_{k,j})$ be the conditional loss probabilities for calendar year $j$. Assume that
$\bm{Q}_1,\ldots,\bm{Q}_n$ are independent and identically distributed.
Given $ \bm{Q}_j$, the random variables $Y_{r,j,i}$ are independent Bernoulli random variables with respective means 
$Q_{r,j}$, so that the conditional distribution of $M_{r,j}$ is given by
\begin{equation*}
\PP[M_{r,j} = \ell \mid \bm{Q}_j] = \binom{m_{r,j}}{\ell} Q_{r,j}^{\ell} \left( 1 - Q_{r,j}\right)^{m_{r,j}-\ell},
\qquad \ell\in\{0,\ldots,m_{r,j}\}.
\end{equation*}
Conditionally on $\bm{Q}_j$, the numbers $M_{1,j},\ldots,M_{k,j}$ of risks producing losses are independent and binomially distributed. 

\subsection{Quantities of interest}\label{sec:qinterest}

We are interested in the estimation of the following quantities: 
\begin{description}
\item[Marginal loss probabilities.]
The  probability that  risk $i$ in category $r$ produces a loss during year $j$ is given by
\begin{equation}\label{eq:pi}
  \pi_r = \PP[ Y_{r,j,i} = 1 ] = \EE[ Y_{r,j,i}] = \EE[ Q_{r,j} ].
\end{equation}
\item[Joint loss probabilities.]
The probability that two different risks $i_1$ and $i_2$ in the 
same or different categories $r$ and $s$ produce losses during the same year $j$ is given by
\begin{equation}\label{eq:pijoint}
  \pi_{rs} 
  = \PP[ Y_{r,j,i_1} = 1, \, Y_{s,j,i_2} = 1 ]
  = \EE[ Y_{r,j,i_1} Y_{s,j,i_2} ] 
  = \EE[ Q_{r,j} Q_{s,j} ].
\end{equation}
\item[Intra-class higher-order loss probabilities.]
The probability that $\ell \geq 1$ risks within the same category $r$ 
produce losses during the same year $j$ is equal to
\begin{equation*}
\pi_{r}^{(\ell)} = \PP [Y_{r,j,1} = \ldots = Y_{r,j,\ell} = 1 ] = \EE [ Q_{r,j}^{\ell} ].
\end{equation*}
Clearly, $\pi_r^{(1)} = \pi_r$ and $\pi_r^{(2)} = \pi_{rr}$. 
\item[Inter-class higher-order joint loss probabilities.]
The probability that $\ell_1 \geq 1$ risks in category $r$ and $\ell_2 \geq 1$ 
risks in category $s$, where $r \neq s$, produce losses during the same year $j$ is given by
\begin{equation*}
\pi_{rs}^{(\ell_1,\ell_2)} = \PP [Y_{r,j,1} = \ldots = Y_{r,j,\ell_1} = 1, Y_{s,j,1} = \ldots = Y_{s,j,\ell_2}= 1] = \EE [Q_{r,j}^{\ell_1} Q_{s,j}^{\ell_2}]. 
\end{equation*}
Clearly, $\pi_{rs}^{(1,1)} = \pi_{rs}$. 
\end{description}
The higher-order (joint) loss probabilities are not of primary interest; they will appear in Section~\ref{sec:nonpar} where we will define nonparametric estimators for $\pi_r$ and $\pi_{rs}$. 

Dependence measures are easily expressed in terms of the probabilities defined above. For instance, the relative risk measure, or risk
ratio, used in \citet{valdez2013} in motor insurance, can be written as
$$
\frac{\PP [Y_{r,j,i_1}=1|Y_{s,j,i_2}=1]}{\PP [Y_{r,j,i_1}=1|Y_{s,j,i_2}=0]}
=\frac{\pi_{rs}(1-\pi_{s})}{(\pi_r-\pi_{rs})\pi_{s}}.
$$
Borrowed from medical studies, this quantity measures the tendency of one risk to induce another risk to produce losses. 
As pointed out in \citet{valdez2013}, the linear correlation coefficient is less suitable as a measure of association between binary random variables.
For more details, see e.g.\ \citet{denuit2005}.

\subsection{Factor models} 
\label{sec:par}

We assume a parametric model for the conditional default probabilities $\bm{Q}_j$ by setting $Q_{r,j} = Q_r (\bm{\Psi}_j ; \thetavec)$,
where $\bm{\Psi}_j = (\Psi_{1,j},\ldots,\Psi_{p,j})$ with $p<m_{r,j}$ for $j \in \{1,\ldots,n\}$ 
are independent and identically distributed latent factors with some known distribution, $\thetavec$ 
is the parameter vector, and $Q_r (\cdot \, ; \thetavec)$ are functions from $\mathbb{R}^p$ to $[0,1]$. 
To simplify the notation, we will usually omit the dependence on $\thetavec$.

Formally, for fixed $r$ and $j$, given $p$-dimensional vectors $\bm{\Psi}_j$ with $p<m_{r,j}$, $\bm{Y}_{r,j}$ follows a Bernoulli mixture model with factor vector $\bm{\Psi}_j$ if there exist functions $Q_r: \mathbb{R}^p \rightarrow [0,1]$, $r \in \{1,\ldots,k \}$, such that given $\bm{\Psi}_j = \bm{\psi}_j$, $\bm{Y}_{r,j} $ is a vector of independent Bernoulli variables with $\PP[Y_{r,j,i} = 1 \mid \bm{\Psi}_j = \bm{\psi}_j] = Q_r (\bm{\psi}_j)$ for $i \in \{1,\ldots,m_{r,j}\}$, where $\bm{\psi}_j = (\psi_{j,1},\ldots,\psi_{j,p})$. Dependence between loss indicators is essentially dependence of conditional loss probabilities on a set of factors. 

As described in Section~\ref{sec:maxfactor}, our focus is on a Gumbel max-factor model. For comparison, we consider factor models based on the normal distribution and a Gumbel one-factor model as well.
\begin{description}
\item[Model (1a)]
The one-factor Probit-Normal specification assumes that 
for every year $j$
\begin{equation*}
Q_{r} \left( \Psi_j \right) = \Phi(\mu_{r} + \sigma_{r} \Psi_j ), \qquad \sigma_{r} >0, \, r \in \{1,\ldots,k\},
\end{equation*}
where $\Phi$ denotes the standard Normal distribution function and $\Psi_1,\ldots,\Psi_k$
are independent with common distribution function $\Phi$ for $j \in \{1,\ldots,n\}$. 
The model parameters are $\thetavec = (\mu_1,\ldots,\mu_k,\sigma_1,\ldots,\sigma_k)$. 
This classical model has been applied in \citet{frey2003} to the same dataset appearing in Section~\ref{sec:numeric}.
\item[Model (2a)]
A direct extension of model (1a) is
\begin{equation*}
Q_{r} \left( \bm{\Psi}_j \right)
= \Phi \left( \mu_{r} + \tau_{r} \Psi_{r,j} + \sigma_{r} \Psi_{0,j} \right), \qquad \sigma_{r},\tau_{r} >0, \, r \in \{1,\ldots,k\},
\end{equation*}
where the $k + 1$ components $\Psi_{0,j}, \Psi_{1,j}, \ldots,\Psi_{k,j}$ of $\bm{\Psi}_j$ are independent with common distribution function $\Phi$. The model parameters are $\thetavec = (\mu_1,\ldots,\mu_k,\tau_1,\ldots,\tau_k,\sigma_1,\ldots,\sigma_k)$. 
If $\tau_r \rightarrow 0$ for every $r$, we retrieve model (1a).
\item[Model (1b)] 
The Gumbel one-factor can be defined as
\begin{equation*}
Q_{r} \left( \Psi_j \right) = F_{\Psi} (\mu_{r} + \sigma_{r} \Psi_j),  \qquad \sigma_{r} >0, \, r \in \{1,\ldots,k\},
\end{equation*}
where the factors $\Psi_1,\ldots,\Psi_k$ are independent random variables with common distribution function $F_{\Psi} (x) =  \exp \left( - \exp(-x) \right)$. The vector of model parameters is $\thetavec = (\mu_1,\ldots,\mu_k,\sigma_1,\ldots,\sigma_k)$. 
\item[Model (2b)]
For the Gumbel max-factor model, we take 
\begin{equation*}
Q_{r} \left( \bm{\Psi}_j \right)
= F_{\Psi} \left( \max{\{\nu_{r} + \sigma_{r} \Psi_{r,j}, \mu_{r} + \sigma_{r} \Psi_{0,j} \}} \right), \qquad \sigma_{r}  >0,
\end{equation*}
where $\Psi_{0,j}, \Psi_{1,j}, \ldots,\Psi_{k,j}$ are independent with common distribution function $F_\Psi$. The model parameters are $\thetavec = (\nu_1,\ldots,\nu_k,\mu_1,\ldots,\mu_k,\sigma_1,\ldots,\sigma_k)$. 
If $\nu_r \rightarrow -\infty$ for every $r$, then we are back at model (1b).
\end{description}
Models (1a)-(1b) involve a single factor but differ in the right tails of the conditional loss probabilities $Q_{r} \left( \bm{\Psi}_j \right)$:
the probability that these conditional probabilities exceed high thresholds is typically larger under the Gumbel specification
compared to the Gaussian one. Considering models (2a)-(2b), a global effect $\Psi_{0,j}$ is now combined with category-specific
effects $\Psi_{r,j}$. This gives more flexibility as conditional loss probabilities now become dependent, sharing the common random
effect $\Psi_{0,j}$. It is worth mentioning that the interpretation of the parameters is different under models (2a) and (2b).
In model (2a), the coefficients $\tau_r$ and $\sigma_r$ multiplying the random effects measure the sensitivity of the individual
risks in category $r$ to $\Psi_{r,j}$ and $\Psi_{0,j}$, respectively, whereas these sensitivities are measured by the additive parameters
$\nu_r$ and $\mu_r$ in model (2b).

As described in Section~\ref{sec:qinterest}, we focus on the marginal loss probabilities $\pi_r$ and the joint loss probabilities $\pi_{rs}$. If $F_{\bm{\Psi}}$ is the distribution function of a generic risk factor $\bm{\Psi}$, then these loss probabilities are obtained directly from \eqref{eq:pi} and \eqref{eq:pijoint} by
\begin{align}
 \pi_r & = \EE [Q_r(\bm{\Psi})] = \int Q_r(\bm{\psi}) \, \textrm{d}F_{\bm{\Psi}} (\bm{\psi}), \label{pir} \\
  \pi_{rs} & = \EE [Q_r(\bm{\Psi})\,Q_s(\bm{\Psi})] = \int Q_r(\bm{\psi}) \, \, Q_s(\bm{\psi}) \, \textrm{d}F_{\bm{\Psi}} (\bm{\psi}). \label{pirs}
\end{align}

\subsection{Likelihood} 
\label{sec:likelihood}
Let $F_{\bm{\Psi}}$ denote again the distribution function of a generic risk factor $\bm{\Psi}$. For category $r$ and year $j$, the unconditional distribution of the number of risks producing losses is given by 
\begin{equation*}
\PP[M_{r,j} = \ell_{r,j}] = \binom{m_{r,j}}{\ell_{r,j}} \int Q_r \left(\bm{\psi}_j\right)^{\ell_{r,j}} 
\big( 1-Q_r(\bm{\psi}_j) \big)^{m_{r,j}-\ell_{r,j}} \, \textrm{d} F_{\bm{\Psi}} (\bm{\psi}_j).
\end{equation*}
We write $\bm{M}_j = (M_{1,j},\ldots,M_{k,j})$ and $\ellvec_j = (\ell_{1,j},\ldots,\ell_{k,j})$ for $j=1,\ldots,n$.
Notice that since the loss indicators are independent given the vectors $\bm{\Psi}_j$, we can write
\begin{equation}\label{eq:nrdef2}
\PP[ \bm{M}_j = \ellvec_j \mid \bm{\Psi}_j = \bm{\psi}_j] = \prod_{r=1}^k \binom{m_{r,j}}{\ell_{r,j}} 
 Q_r(\bm{\psi}_j)^{\ell_{r,j} } \big( 1 - Q_r(\bm{\psi}_j) \big)^{m_{r,j} - \ell_{r,j}}.
\end{equation}
For every year we have expression \eqref{eq:nrdef2} and the log-likelihood takes the form
\begin{equation*}
L_n(\thetavec; \bm{M}_1,\ldots,\bm{M}_n) = \sum_{j=1}^n \sum_{r=1}^k \log{ \binom{m_{r,j}}{M_{r,j}} } + \sum_{j=1}^n \log{I_j},
\end{equation*}
where 
\begin{equation*}
I_j = \int \prod_{r=1}^k  Q_r(\bm{\psi}_j)^{M_{r,j}} \big( 1 - Q_r(\bm{\psi}_j) \big)^{m_{r,j} - M_{r,j}}  \, \textrm{d}F_{\bm{\Psi}} (\bm{\psi}_j).
\end{equation*}
For the one-factor models (1a) and (1b), we find it convenient to make the substitution $q = F_{\Psi} (\psi_j)$ and to evaluate $I_j$ as
\begin{equation*}
I_j = \int_0^1 \exp \left( \sum_{r=1}^k  M_{r,j} \log{\left( Q_r(F_{\Psi}^{-1} (q)) \right)} + (m_{r,j} - M_{r,j}) \log{ \left( 1 - Q_r(F_{\Psi}^{-1} (q) ) \right)} \right)  \, \textrm{d}q.
\end{equation*}
For models (2a) and (2b), we can make the substitutions $q_l = F_{\Psi} (\psi_{j,l})$ for $l \in \{0,\ldots,k\}$ since $\bm{\psi}_j = (\psi_{j,0}, \ldots, \psi_{j,k})$. Then, we can write the likelihood as 
\begin{equation}\label{eq:lik1}
I_j =  \int_{[0,1]^{k+1}} \left( \prod_{r=1}^k f_{r,j} (q_r,q_0) \right) \, \textrm{d}q_0 \cdots \textrm{d}q_k , 
\end{equation}
where 
\begin{equation}\label{eq:lik2}
f_{r,j} (q_r,q_0) = \Bigl(Q_r \left( F_{\Psi}^{-1} (q_r), F_{\Psi}^{-1} (q_0) \right)\Bigr)^{M_{r,j}} \, 
\Bigl( 1 - Q_r \left( F_{\Psi}^{-1} (q_r), F_{\Psi}^{-1} (q_0) \right) \Bigr)^{m_{r,j} - M_{r,j}}.
\end{equation}
Each likelihood term involves high-dimensional numerical integration over a complicated function. Especially for model (2b), when the integrand is a product of maxima, a nondifferentiable function, this is a computational burden. Fortunately, we can simplify the likelihood to a sum of lower-dimensional integrals over smoother functions thanks to the following result.
\begin{Lemma}\label{lemma1}
Define $I_j$ and $f_{r,j}$ for $r=1,\ldots,k$ and $j=1,\ldots,n$ as in \eqref{eq:lik1} and \eqref{eq:lik2}, where $Q_r$ is the function corresponding to the Gumbel max-factor model, Model (2b). Define
\begin{align*}
g_r (q) & = \exp \left\{ \log(q) \exp \left( \frac{\nu_r - \mu_r}{\sigma_r} \right) \right\}, \\
h_{r,j} (q ; \mu_r) & = F_{\Psi} (\mu_r - \sigma_r \log \left( - \log (q) \right))^{M_{r,j}}  \times 
\left( 1 - F_{\Psi} (\mu_r - \sigma_r \log \left( - \log (q) \right)  \right)^{m_{r,j} - M_{r,j}}.
\end{align*}
Let $R = \{1,\ldots,k\}$ and let $\mathcal{P}(R)$ denote the power set of $R$. Then
\begin{equation}\label{eq:finallik}
I_j = \sum_{I \in \mathcal{P}(R)} \int_0^1 \left( \prod_{r \in R \setminus I} g_r (q_0) \, h_{r,j} (q_0 ; \mu_r) \right) \left(\, \prod_{r \in I} \int_{g_r (q_0)}^1 h_{r,j} (q_r ; \nu_r) \, \mathrm{d} q_r \right) \, \mathrm{d} q_0.
\end{equation}
\end{Lemma}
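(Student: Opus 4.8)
The plan is to change variables to the Gumbel quantile scale, identify the region in which each branch of the maximum is active, and then expand the resulting product of two-term sums over the $k$ categories. Recall that for the Gumbel law $F_{\Psi}(x)=\exp(-\exp(-x))$ one has $F_{\Psi}^{-1}(q)=-\log(-\log q)$ for $q\in(0,1)$. Substituting this into \eqref{eq:lik2} turns the argument of the outer $F_{\Psi}$ in Model (2b) into $\max\{\nu_r-\sigma_r\log(-\log q_r),\,\mu_r-\sigma_r\log(-\log q_0)\}$, so that $f_{r,j}(q_r,q_0)$ depends on $(q_r,q_0)$ only through which of these two expressions is larger.

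First I would determine the switching curve. A short computation shows that the global branch $\mu_r-\sigma_r\log(-\log q_0)$ dominates the class-specific branch exactly when $q_r\le q_0^{\exp((\nu_r-\mu_r)/\sigma_r)}=g_r(q_0)$; here the signs $\sigma_r>0$ and $\log q_0,\log q_r<0$ must be tracked to fix the direction of the inequality. On $\{q_r\le g_r(q_0)\}$ the maximum equals the global branch, hence $f_{r,j}(q_r,q_0)=h_{r,j}(q_0;\mu_r)$, which does not depend on $q_r$; on $\{q_r\ge g_r(q_0)\}$ the maximum equals the class-specific branch, hence $f_{r,j}(q_r,q_0)=h_{r,j}(q_r;\nu_r)$. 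The two expressions agree on the curve $q_r=g_r(q_0)$, so the piecewise description is unambiguous.

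Next I would apply Fubini's theorem to \eqref{eq:lik1}, integrating out $q_1,\ldots,q_k$ for fixed $q_0$; since $f_{r,j}$ couples only $q_r$ and $q_0$, the inner integral factorizes as $\prod_{r=1}^k\int_0^1 f_{r,j}(q_r,q_0)\,\mathrm{d}q_r$. Splitting each single integral at the threshold $g_r(q_0)$ and inserting the two cases above gives $\int_0^1 f_{r,j}(q_r,q_0)\,\mathrm{d}q_r=g_r(q_0)\,h_{r,j}(q_0;\mu_r)+\int_{g_r(q_0)}^1 h_{r,j}(q_r;\nu_r)\,\mathrm{d}q_r$, where the first term is $g_r(q_0)h_{r,j}(q_0;\mu_r)$ because the integrand is constant in $q_r$ on $[0,g_r(q_0)]$. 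Finally, expanding the product over $r$ of these two-term sums produces $2^k$ summands indexed by the subset $I\subseteq R$ of categories for which the class-specific (integral) term is selected, while $R\setminus I$ collects the global terms; integrating over $q_0$ then yields exactly \eqref{eq:finallik}.

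The main obstacle is the threshold step: getting $g_r$ correct, with the right exponent $\exp((\nu_r-\mu_r)/\sigma_r)$ and the right inequality direction, and confirming continuity across $q_r=g_r(q_0)$ so that the split is well defined. The Fubini factorization and the product-to-sum expansion over subsets are then essentially bookkeeping.
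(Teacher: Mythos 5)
Your proof is correct, and it organizes the argument differently from the paper. Both proofs hinge on the same analytic ingredient: the switching relation $q_r > g_r(q_0)$ iff the class-specific branch $\nu_r - \sigma_r\log(-\log q_r)$ exceeds the global branch $\mu_r - \sigma_r\log(-\log q_0)$, and the resulting piecewise form of $f_{r,j}$. Where you diverge is the combinatorial bookkeeping: the paper proves \eqref{eq:finallik} by induction on $k$, establishing the case $k=1$ and then splitting off the $(k+1)$-th factor, whereas you integrate out $q_1,\ldots,q_k$ at fixed $q_0$ via Fubini--Tonelli, reduce each coordinate to the two-term sum
\[
\int_0^1 f_{r,j}(q_r,q_0)\,\mathrm{d}q_r \;=\; g_r(q_0)\,h_{r,j}(q_0;\mu_r) \;+\; \int_{g_r(q_0)}^1 h_{r,j}(q_r;\nu_r)\,\mathrm{d}q_r,
\]
and obtain the sum over $\mathcal{P}(R)$ in one stroke from the distributive law, the subset $I$ recording which coordinates contribute the integral term. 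Your route is shorter and arguably cleaner: the paper's inductive step applies the induction hypothesis to the inner integral $\int_{[0,1]^k}\prod_{r=1}^k f_{r,j}(q_0,q_r)\,\mathrm{d}q_1\cdots\mathrm{d}q_k$ at \emph{fixed} $q_0$, i.e., to a conditional version of the claim rather than to $I_j$ itself, a point the paper glosses over but which your factorization makes explicit and unnecessary. The induction, in turn, makes the $2^k$-term bookkeeping fully explicit step by step, which is what the direct expansion compresses into a single appeal to distributivity. One small remark: justifying the interchange of integrals is immediate here since the integrand is nonnegative and bounded by $1$, so Tonelli applies without further conditions.
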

The proof of this result is provided in Appendix~\ref{sec:proofs}.

The parameter vector $\thetavec$ is estimated by maximizing the log-likelihood $L_n (\bm{\theta})$.
After estimating $\thetavec$, the implied marginal and joint loss probabilities, \eqref{eq:pi} and \eqref{eq:pijoint}, are obtained by plugging in the estimator of $\thetavec$ in expressions \eqref{pir} and \eqref{pirs}, yielding $\widehat{\pi}_r$ and $\widehat{\pi}_{rs}$, respectively.

\section{Nonparametric estimation}\label{sec:nonpar}
Nonparametric estimators can be useful as a benchmark for model-based estimators, especially in the case of model uncertainty. Usually, the nonparametric estimators presented in Section~\ref{noweight} are used, see for example \citet{frey2003}. However, since the numbers $m_{r,j}$ may vary strongly over the years, more accurate nonparametric estimators are obtained by assigning more weight to those years for which there is more information (Section~\ref{weight}).

\subsection{Preliminary estimators}\label{noweight}

Define the observed proportions of risks producing losses as  
\begin{equation*}
\widehat{Q}_{r,j} = M_{r,j} / m_{r,j}, \qquad \text{ for } r \in \{1,\ldots,k\}, \, j \in \{1,\ldots,n\}. 
\end{equation*}
For $r \neq s$, define the estimators
\begin{align}
\widehat{\pi}^{(\ell)}_r & = \frac{1}{n} \sum_{j=1}^n \frac{M_{r,j} (M_{r,j} - 1) \cdots (M_{r,j} - \ell + 1)}{m_{r,j}(m_{r,j} - 1) \cdots (m_{r,j} - \ell + 1)}, 
\quad \text{ for }\ell < m_{r,j}, \label{Prel1}\\
\widehat{\pi}^{(\ell_1,\ell_2)}_{rs} & = \frac{1}{n} \sum_{j=1}^n \frac{M_{r,j} (M_{r,j} - 1) \cdots 
(M_{r,j} - \ell_1 + 1)}{m_{r,j}(m_{r,j} - 1) \cdots (m_{r,j} - \ell_1 + 1)} \, \frac{M_{s,j} (M_{s,j} - 1) \cdots (M_{s,j} - \ell_2 + 1)}{m_{s,j}(m_{s,j} - 1) \cdots (m_{s,j} - \ell_2 + 1)},\label{Prel2}  
\end{align}
for $\ell_1 < m_{r,j}$ and $\ell_2 < m_{s,j}$.
To see that \eqref{Prel1} and \eqref{Prel2} are unbiased estimators, recall that if the random variable $M$ is binomially distributed
with $n$ trials and success probability $p$, then we have for $\ell \in \{1, \ldots, n\}$ that
\begin{equation}\label{eq:binmoment}
  \EE[ M (M - 1) \cdots (M - \ell+1) ] = n(n-1)\cdots(n-\ell+1) \, p^\ell.
\end{equation}
Conditionally on $\bm{Q}_j $,  the random variable $M_{r,j}$ follows the binomial distribution with $m_{r,j}$ trials and success probability $Q_{r,j}$. Hence, for $\ell < m_{r,j}$, 
\begin{eqnarray*}
\EE [ \widehat{\pi}_r^{(\ell)}]& =& \frac{1}{n} \sum_{j=1}^n   \EE \left[ \EE \left[ 
    \frac%
      { M_{r,j} (M_{r,j} - 1) \cdots (M_{r,j} - \ell + 1) }%
      { m_{r,j} (m_{r,j} - 1) \cdots (m_{r,j} - \ell + 1) } \, \middle| \, \bm{Q}_j \right]
  \right] \\
 & = &\frac{1}{n} \sum_{j=1}^n \EE [ Q_{r,j}^\ell ] = \pi_r^{(\ell)},
\end{eqnarray*}
and similarly, $\EE [\widehat{\pi}_{rs}^{(\ell_1,\ell_2)}] = \pi_{rs}^{(\ell_1,\ell_2)}$. 

\subsection{Weighted estimators}\label{weight}

For the marginal loss probabilities $\pi_r$, consider estimators of the form
\[
  \widetilde{\pi}_r (\bm{w}_r) = \sum_{j=1}^n w_{r,j} \widehat{Q}_{r,j}, \qquad r \in \{1,\ldots,k\}, 
\]
where $\widehat{Q}_{r,1},\ldots,\widehat{Q}_{r,n}$ have a common expectation $\EE[ \widehat{Q}_{r,j} ] = \pi_r$ and possibly different variances $\mathrm{Var} [ \widehat{Q}_{r,j} ] = \sigma_{r,j}^2$, 
and where the weight vector $\bm{w}_r = (w_{r,1}, \ldots, w_{r,n})$ has nonnegative entries. 
We seek optimal weights, in the sense that we minimize the mean squared error of $\widetilde{\pi}_r (\bm{w}_r)$ as a function of $\bm{w}_r$, leading to weights $w_{r,1,\opt},\ldots,w_{r,n,\opt}$.
The same recipe can be followed for the joint loss probabilities $\pi_{rr}$ and $\pi_{rs}$. 

\begin{Theorem}\label{thmopt0}
The estimators $(\pi_{r,\opt},\pi_{rr,\opt},\pi_{rs,\opt})$
for $r,s, \in \{1,\ldots,k\}$ and $r \neq s$ that minimize the mean squared error of $(\widetilde{\pi}_r (\bm{w}_r), \widetilde{\pi}_{rr} (\bm{w}_{r}), \widetilde{\pi}_{rs} (\bm{w}_r))$ are
\begin{enumerate}
\item $\begin{aligned}[t]
  \pi_{r,\opt}  & = \sum_{j=1}^n w_{r,j,\opt} \frac{M_{r,j}}{m_{r,j}}, \qquad  w_{r,j,\opt}  = \frac{\sigma_{r,j}^{-2}}{\pi_r^{-2} + \sum_{t=1}^n \sigma_{r,t}^{-2}}, \\ 
   \sigma_{r,j}^2 & =  \frac{\pi_r}{m_{r,j}} + \left( 1 - \frac{1}{m_{r,j}} \right) \pi_{rr} - \pi^2_r;
\end{aligned}$
\item $\begin{aligned}[t]
  \pi_{rr,\opt}  &= \sum_{j=1}^n w_{rr,j,\opt} \frac{M_{r,j} (M_{r,j}-1)}{m_{r,j} (m_{r,j}-1)}, \qquad
   w_{rr,j,\opt}  = \frac{\sigma_{rr,j}^{-2}}{\pi_r^{-2} + \sum_{t=1}^n \sigma_{rr,t}^{-2}}, \\
   \sigma_{rr,j}^2 & =  m_{r,j} (m_{r,j} - 1) \bigl[ (2 - m_{r,j} (m_{r,j} - 1) \pi_{rr}) \pi_{rr} \\ 
& \qquad  + 4 (m_{r,j} - 2) \pi^{(3)}_{r} + (m_{r,j} -2) (m_{r,j} -3) \pi^{(4)}_{r} \bigr];
\end{aligned}$
\item $\begin{aligned}[t]
  \pi_{rs,\opt} &= \sum_{j=1}^n w_{rs,j,\opt} \frac{M_{r,j} M_{s,j}}{m_{r,j} m_{s,j}}, \qquad
w_{rs,j,\opt} = \frac{\sigma_{rs,j}^{-2}}{\pi_r^{-2} + \sum_{t=1}^n \sigma_{rs,t}^{-2}}, \\
\sigma_{rs,j}^2 &  =  m_{r,j}^{-1} m_{s,j}^{-1} \bigl[ (1 - m_{r,j} m_{s,j} \pi_{rs}) \pi_{rs} + (m_{s,j} - 1) \pi^{(1,2)}_{rs} + (m_{r,j} - 1) \pi^{(2,1)}_{rs} \\ 
& \qquad + (1 - m_{s,j} - m_{r,j} + m_{r,j} m_{s,j} \pi^{(2,2)}_{rs} ) \bigr] .
\end{aligned}$
\end{enumerate}
The variances of these estimators are given by
\begin{align}
\label{eq:asvar}
\mathrm{Var}  [\pi_{r,\opt}] &= \frac{1}{ \sum_{j=1}^n \sigma_{r,j}^{-2}}, &
\mathrm{Var}  [\pi_{rr,\opt}] &= \frac{1}{\sum_{j=1}^n \sigma_{rr,j}^{-2}}, &
\mathrm{Var} [\pi_{rs,\opt} ] &= \frac{1}{ \sum_{j=1}^n \sigma_{rs,j}^{-2}}. 
\end{align}
\end{Theorem}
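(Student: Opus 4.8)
The plan is to handle the three optimization problems through a single abstract template and then carry out the moment computations that instantiate it. In every case the estimator is a weighted average $\sum_{j=1}^n w_j T_j$ of year-specific statistics $T_1,\dots,T_n$ that are \emph{independent} across $j$ (since $\bm{Q}_1,\dots,\bm{Q}_n$ are i.i.d.\ and $T_j$ is a function of the year-$j$ data only) and each \emph{unbiased} for a common target $\mu$: take $T_j=\widehat{Q}_{r,j}$ with $\mu=\pi_r$, or $T_j=M_{r,j}(M_{r,j}-1)/[m_{r,j}(m_{r,j}-1)]$ with $\mu=\pi_{rr}$, or $T_j=M_{r,j}M_{s,j}/(m_{r,j}m_{s,j})$ with $\mu=\pi_{rs}$; unbiasedness of all three was already established in Section~\ref{noweight}. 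Writing $\sigma_j^2=\VV[T_j]$, I would first solve this generic weighting problem and then plug in the three explicit variances.

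For the generic problem I would decompose the mean squared error into variance plus squared bias. Independence across $j$ kills the cross terms, so $\VV[\sum_j w_j T_j]=\sum_j w_j^2\sigma_j^2$, while common unbiasedness gives a bias of $\mu(\sum_j w_j-1)$, whence
\[
\mathrm{MSE}(\bm{w})=\sum_{j=1}^n w_j^2\sigma_j^2+\mu^2\Bigl(\sum_{j=1}^n w_j-1\Bigr)^2 .
\]
This is a strictly convex quadratic in $\bm{w}$ — its Hessian $2\,\mathrm{diag}(\sigma_j^2)+2\mu^2\bm{1}\bm{1}^\top$ is positive definite — so its unique minimizer is the stationary point. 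The first-order conditions $w_j\sigma_j^2+\mu^2(\sum_t w_t-1)=0$ force $w_j\propto\sigma_j^{-2}$; solving the resulting scalar equation for the common proportionality factor returns precisely the stated weights $w_j=\frac{\sigma_j^{-2}}{\mu^{-2}+\sum_t\sigma_t^{-2}}$, with $\mu$ the relevant common mean. Substituting the optimal weights back and simplifying then yields the variances in \eqref{eq:asvar}, expressed as the reciprocal of the total precision $\sum_j\sigma_j^{-2}$.

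The substantive part is the evaluation of the three variances, all obtained by conditioning on $\bm{Q}_j$ (tower property) together with the fact that, given $\bm{Q}_j$, $M_{r,j}$ is $\mathrm{Binomial}(m_{r,j},Q_{r,j})$ and $M_{r,j},M_{s,j}$ are independent for $r\neq s$. For $\sigma_{r,j}^2$ I would use $\EE[M_{r,j}^2\mid\bm{Q}_j]=m_{r,j}Q_{r,j}+m_{r,j}(m_{r,j}-1)Q_{r,j}^2$ and take expectations, producing $\pi_r$ and $\pi_{rr}$. For $\sigma_{rr,j}^2$ the key algebraic step is the falling-factorial identity $[M(M-1)]^2=M^{(4)}+4M^{(3)}+2M^{(2)}$; applying the binomial factorial-moment formula \eqref{eq:binmoment} term by term converts these into $Q_{r,j}^4,Q_{r,j}^3,Q_{r,j}^2$, and taking expectations introduces $\pi_r^{(4)},\pi_r^{(3)},\pi_{rr}$, after which division by $[m_{r,j}(m_{r,j}-1)]^2$ and subtraction of $\pi_{rr}^2$ collapses everything to the stated closed form. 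For $\sigma_{rs,j}^2$ I would factor $\EE[(M_{r,j}M_{s,j})^2\mid\bm{Q}_j]=\EE[M_{r,j}^2\mid\bm{Q}_j]\,\EE[M_{s,j}^2\mid\bm{Q}_j]$ across the two conditionally independent binomials and again apply \eqref{eq:binmoment}, producing the mixed moments $\pi_{rs}^{(2,2)},\pi_{rs}^{(2,1)},\pi_{rs}^{(1,2)},\pi_{rs}$.

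I expect this last block to be the main obstacle: the correct falling-factorial expansions and the careful tracking of the combinatorial prefactors $m^{(k)}$ and of the $-\mu^2$ subtraction, so that the many terms telescope to the compact expressions in the statement. The optimization step itself is routine convex minimization, and once the abstract template is in place each of the three weight vectors and variances follows by specialization.
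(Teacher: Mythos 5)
Your proposal is correct and is essentially the paper's own proof: the paper likewise writes $\mathrm{MSE}[\widetilde{\pi}_r(\bm{w}_r)]=\sum_{j=1}^n w_{r,j}^2\sigma_{r,j}^2+\bigl(\sum_{j=1}^n w_{r,j}-1\bigr)^2\pi_r^2$, solves the first-order conditions to obtain the precision weights, and computes the three variances by conditioning on $\bm{Q}_j$ and applying the binomial factorial-moment formula \eqref{eq:binmoment} (Lemma~\ref{moments}), including precisely your expansion $[M(M-1)]^2=M^{(4)}+4M^{(3)}+2M^{(2)}$ and the conditional-independence factorization for $M_{r,j}M_{s,j}$. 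The differences are cosmetic: you make the convexity of the quadratic explicit and run a single abstract template for all three cases, whereas the paper works out $\pi_r$ in detail and invokes the ``same recipe'' for $\pi_{rr}$ and $\pi_{rs}$; and, like the paper, you identify the optimized variance with the reciprocal total precision in \eqref{eq:asvar}, although the exact minimized mean squared error is $1/(\pi_r^{-2}+\sum_{j=1}^n\sigma_{r,j}^{-2})$.
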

The proof of this result is provided in Appendix~\ref{sec:proofs}.
As the quantities $\sigma_{r,j}$, $\sigma_{rr,j}$, and $\sigma_{rs,j}$ depend on the unknown quantities $\pi_r^{(\ell)}$ and $\pi_{rs}^{(\ell_1,\ell_2)}$, we replace these with their preliminary estimators $\widehat{\pi}_r^{(\ell)}$ and $\widehat{\pi}_{rs}^{(\ell_1,\ell_2)}$ from \eqref{Prel1} and \eqref{Prel2} respectively.

\begin{Definition}\label{thmopt}
Let $\widehat{\pi}_{r}^{(\ell)}$ and $\widehat{\pi}_{rs}^{(\ell_1,\ell_2)}$ be defined as in \eqref{Prel1} and \eqref{Prel2}. 
The estimators \\ $(\widehat{\pi}_{r,\opt},\widehat{\pi}_{rr,\opt},\widehat{\pi}_{rs,\opt})$ of $(\pi_r, \pi_{rr}, \pi_{rs})$ 
for $r,s, \in \{1,\ldots,k\}$ and $r \neq s$ are defined as
\begin{enumerate}
\item $\begin{aligned}[t]
  \widehat{\pi}_{r,\opt} & = \sum_{j=1}^n \widehat{w}_{r,j,\opt} \frac{M_{r,j}}{m_{r,j}}, \qquad  \widehat{w}_{r,j,\opt}  = \frac{\widehat{\sigma}_{r,j}^{-2}}{\widehat{\pi}_r^{-2} + \sum_{t=1}^n \widehat{\sigma}_{r,t}^{-2}}, \\ 
    \widehat{\sigma}_{r,j}^2 & =  \frac{\widehat{\pi}_r}{m_{r,j}} + \left( 1 - \frac{1}{m_{r,j}} \right) \widehat{\pi}_{rr} - \widehat{\pi}^2_r;
\end{aligned}$
\item $\begin{aligned}[t]
  \widehat{\pi}_{rr,\opt}&= \sum_{j=1}^n \widehat{w}_{rr,j,\opt} \frac{M_{r,j} (M_{r,j}-1)}{m_{r,j} (m_{r,j}-1)}, \qquad
    \widehat{w}_{rr,j,\opt}  = \frac{\widehat{\sigma}_{rr,j}^{-2}}{\widehat{\pi}_r^{-2} + \sum_{t=1}^n \widehat{\sigma}_{rr,t}^{-2}}, \\
    \widehat{\sigma}_{rr,j}^2 & =  m_{r,j} (m_{r,j} - 1) \bigl[ (2 - m_{r,j} (m_{r,j} - 1) \widehat{\pi}_{rr}) \widehat{\pi}_{rr} \\ 
& \qquad  + 4 (m_{r,j} - 2) \widehat{\pi}^{(3)}_{r} + (m_{r,j} -2) (m_{r,j} -3) \widehat{\pi}^{(4)}_{r} \bigr];
\end{aligned}$
\item $\begin{aligned}[t]
  \widehat{\pi}_{rs,\opt}&= \sum_{j=1}^n \widehat{w}_{rs,j,\opt} \frac{M_{r,j} M_{s,j}}{m_{r,j} m_{s,j}}, \qquad
  \widehat{w}_{rs,j,\opt} = \frac{\widehat{\sigma}_{rs,j}^{-2}}{\widehat{\pi}_r^{-2} + \sum_{t=1}^n \widehat{\sigma}_{rs,t}^{-2}}, \\
    \widehat{\sigma}_{rs,j}^2 &  =  m_{r,j}^{-1} m_{s,j}^{-1} \bigl[ (1 - m_{r,j} m_{s,j} \widehat{\pi}_{rs}) \widehat{\pi}_{rs} + (m_{s,j} - 1) \widehat{\pi}^{(1,2)}_{rs} + (m_{r,j} - 1) \widehat{\pi}^{(2,1)}_{rs} \\ 
& \qquad + (1 - m_{s,j} - m_{r,j} + m_{r,j} m_{s,j} \widehat{\pi}^{(2,2)}_{rs} ) \bigr] .
\end{aligned}$
\end{enumerate}
\end{Definition}

Approximate standard errors of these estimators can be obtained by plugging in the estimators of $\sigma_{r,j}^2$, $\sigma_{rr,j}^2$ and  $\sigma_{rs,j}^2$ into \eqref{eq:asvar}. 

A simulation study illustrating the performance of these estimators is provided in Appendix~\ref{sec:simstudy}. We found that the relative root mean squared error (RRMSE) of the estimators of $\pi_r$ and $\pi_{rs}$ is significantly lower for the weighted estimators than for the preliminary estimators. Moreover, the weighted nonparametric estimators have low RRMSE in comparison with the maximum likelihood estimators of Section~\ref{sec:likelihood}, especially when the parametric model is misspecified, that is, when we maximize the likelihood of the parameters of a factor model that is different from the true underlying model.

\section{Application to credit risk}\label{sec:numeric}

\subsection{Credit risk data}
\label{sec:data}

We study one-year default rates for groups of obligors formed into static pools (cohorts). The default rates are taken from Table~13
in \cite{sp2001}, where the period of study is 1981--2000. The total data comprises around 9200 obligors rated as of January 1st, 1981, or first rated between that date and December 31st, 1999.
A company is considered defaulted on the date when it is unable to fulfill a payment or any other financial obligation for the first time.
Companies are given credit ratings ranging from AAA to CCC.
We consider here the ratings BB, B, and CCC which form the group ``speculative grade''. The starting year, 1981, does not include companies that defaulted in that year. 
Since it contains zero defaults by construction, we removed that year from our study.

Few obligors default early in their rating history. If default rates are obtained by dividing the number of defaults by all outstanding ratings, then consequently the default rates will be comparatively low during periods of high rating activity. To avoid any misleading results, the data is presented for cohorts called static pools. A static pool is formed on the first day of each year, and includes all companies in the study. The pools are called static because their membership remains constant over time. The obligors are followed from year to year within each pool. The ratings of the first and last days of each year are compared. 
Companies that default (D) or whose ratings have been withdrawn (N.R., not rated) are excluded from subsequent pools. For instance, we start with all companies that had outstanding non-defaulted ratings on January 1st, 1981. The 1982 static pool consisted of all companies that survived 1981 plus all companies that were first rated in 1981. In the scope of our time period, 9169 first-time rated organizations were added to the static pools, 746 companies defaulted and 3118 companies were excluded due to a N.R. rating. A company usually obtains a N.R. rating due to paid-off debt, a result of mergers and acquisitions, or a lack of cooperation with the rating agency.
Figure \ref{fig:tseries} shows the total number of firms per rating class, the number of defaulted firms per rating class and the proportion of defaults per rating class.

\begin{figure}[p]
\centering
\subfloat{\includegraphics[width=0.75\textwidth]{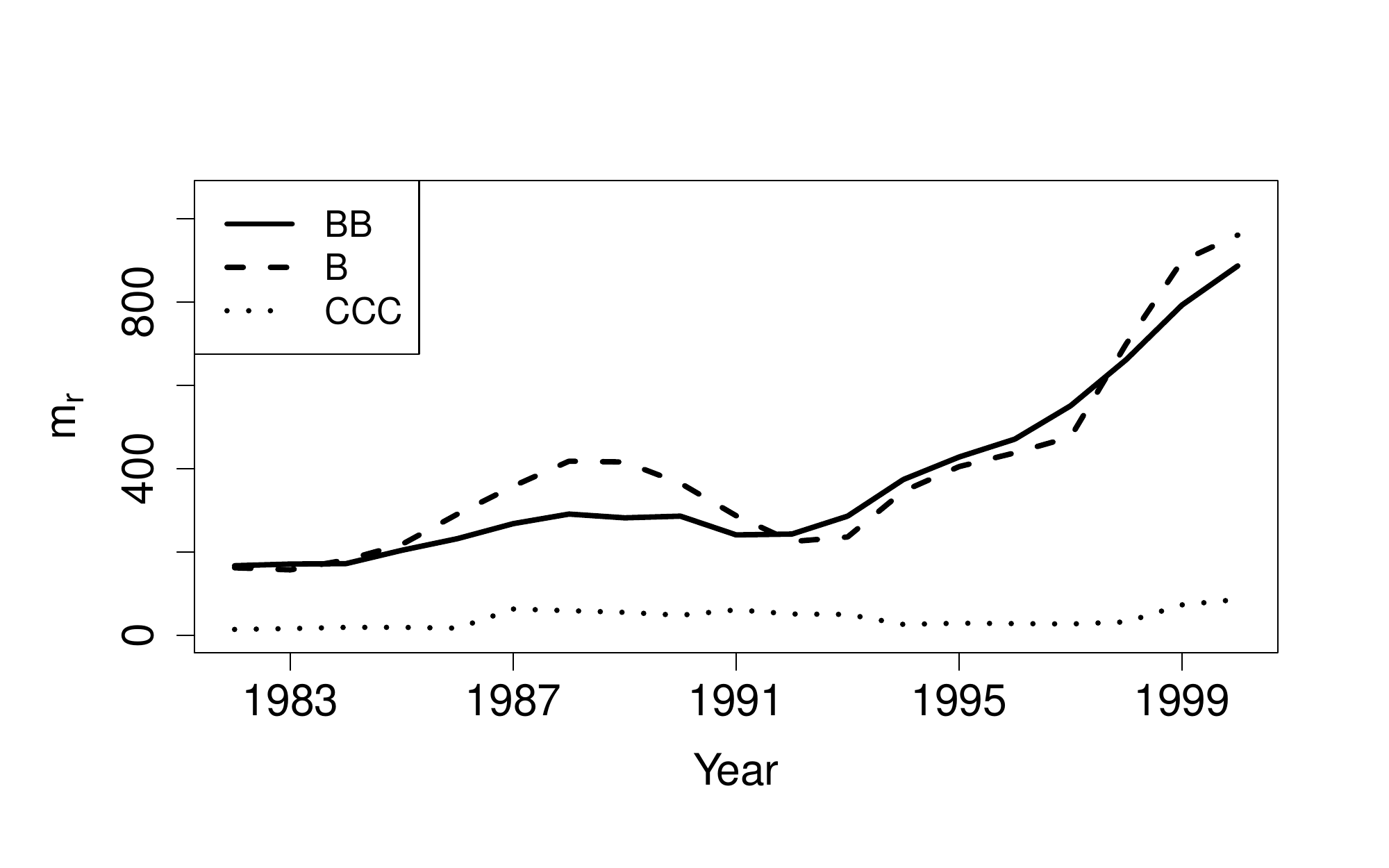}} \\ \vspace{-2cm}
\subfloat{\includegraphics[width=0.75\textwidth]{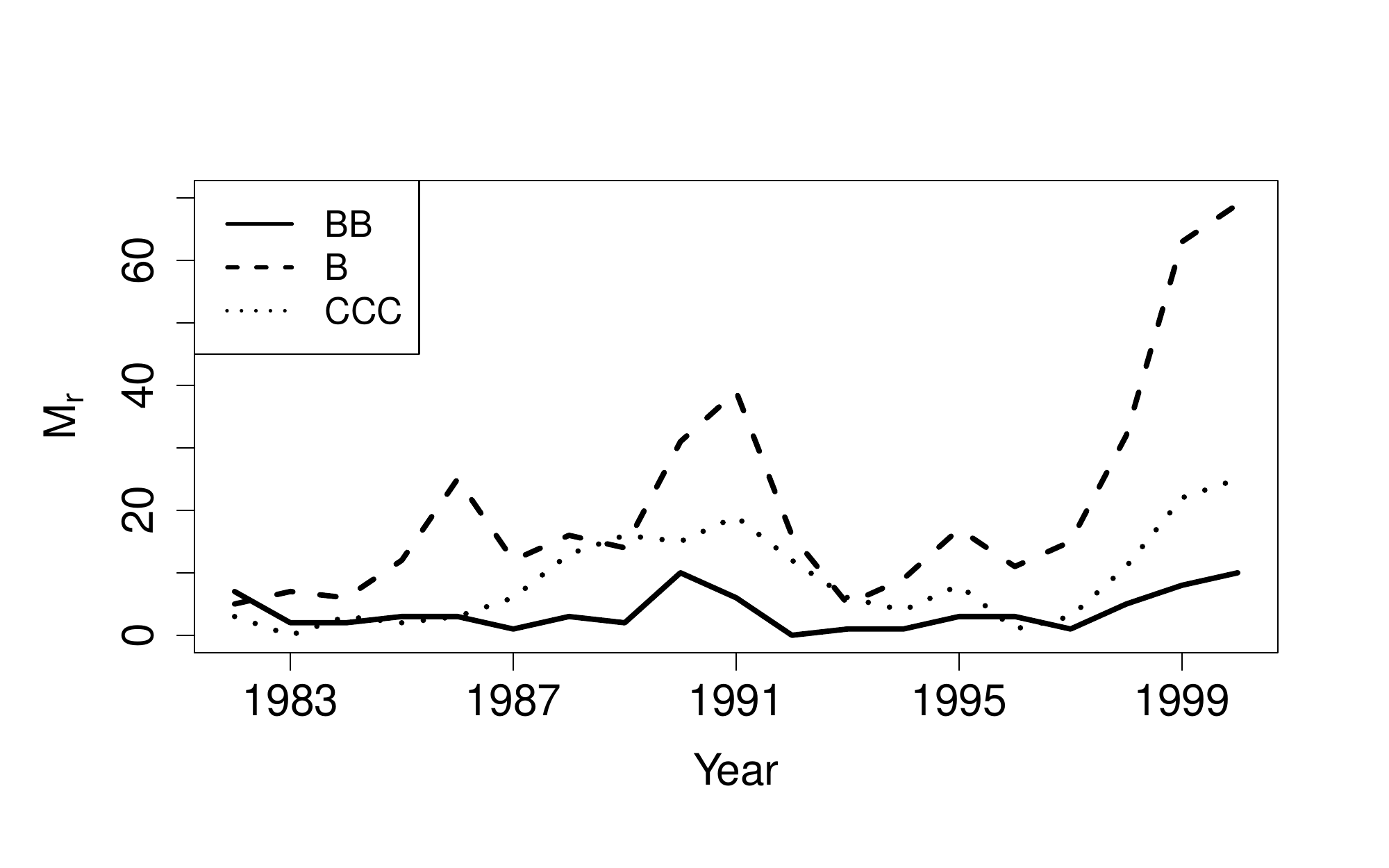}} \\ \vspace{-2cm}
\subfloat{\includegraphics[width=0.75\textwidth]{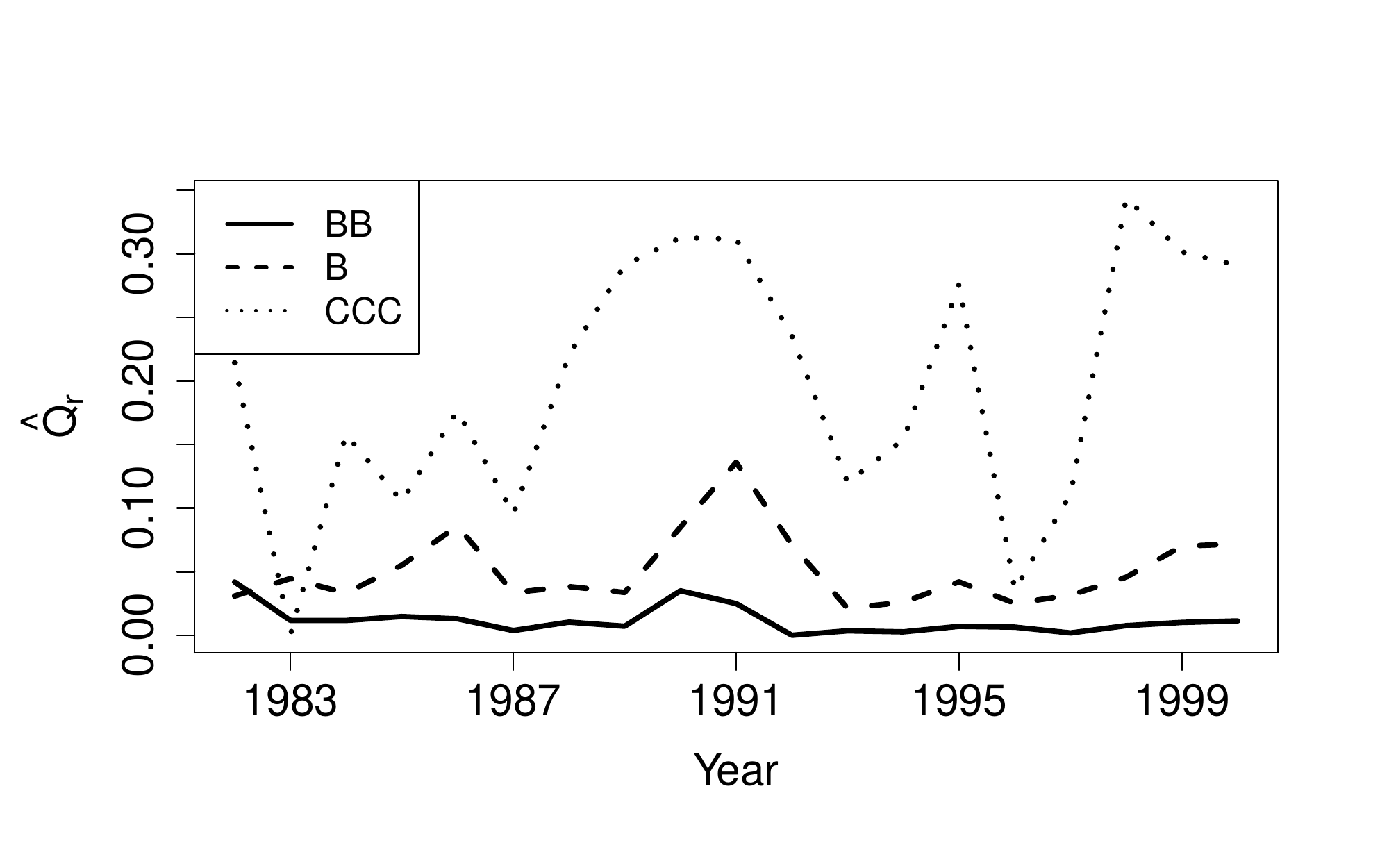}} 
\caption{\citet{sp2001} data: the total number of firms per rating class $m_r$ (top), the number of defaulted firms per rating class $M_r$ (middle) and the proportion of defaults per rating class $\widehat{Q}_r = M_{r} / m_{r}$ for the years 1982--2000, where $r \in \{ \text{BB}, \text{B}, \text{CCC}\}$.}
\label{fig:tseries}
\end{figure}

\subsection{Nonparametric estimation}

Table~\ref{resnonpar} displays  the estimators $\widehat{\pi}_{r,\opt}$ and $\widehat{\pi}_{rs,\opt}$ obtained from Definition~\ref{thmopt}, where $r,s \in \{\text{BB}, \text{B}, \text{CCC}\}$. The standard errors are in parentheses. These values serve as benchmarks to evaluate the accuracy of the parametric factor models fitted in the next section.

\begin{table}[ht]
\centering
\begin{tabular}{lccc}
\toprule
 & BB & B & CCC   \\
 \midrule
$\widehat{\pi}_{r,\opt}$ & 0.0107  (0.0024) & 0.0511 (0.0064) & 0.2069  (0.0225)  \\
\midrule
$\widehat{\pi}_{rs,\opt} \times 1000$ & BB & B & CCC   \\
\midrule
BB &  0.151 (0.081)  & 0.649 (0.206) & 2.438 (0.682) \\
B &  0.649 (0.206) & 3.075 (0.935) & 11.64 (2.438)\\
CCC &  2.438 (0.692) & 11.64 (2.438)  & 49.02 (8.887) \\
\bottomrule
\end{tabular}
  \caption{\citet{sp2001} data: Nonparametric estimators of the marginal default probabilities $\pi_r$ and the joint default probabilities $\pi_{rs}$.}
  \label{resnonpar}
\end{table}

\subsection{Parametric factor models}
We estimated the parameters of the parametric factor models (1a), (2a), (1b), and (2b).
Table \ref{AIC} shows the AIC and BIC for these models. Note that the numbers of parameters for models (2a) and (2b) is
reduced:
\begin{itemize}
\item for model (2a), we find $\tau_{B} \rightarrow 0$, i.e. class B does not require a specific random effect and is influenced by the
global effect, only;
\item for model (2b), we find $\nu_{B}, \nu_{CCC} \rightarrow -\infty$, i.e. classes B and CCC do not require specific random effects and are influenced by the
global effect, only.
\end{itemize}
In terms of both AIC and BIC, the one-factor models (1a) and (1b) perform better than a multi-factor Normal model (model 2a). However, the Gumbel max-factor model (2b) appears to be the best alternative. Between models (1b) and (2b) we can also perform a likelihood ratio test, since model (1b) is a submodel of model (2b) with $\nu_{BB} \rightarrow - \infty$. The value of the likelihood ratio test statistic is equal to $2.76$. The hypothesis for $\nu_{BB} = - \infty$ concerns a value at the boundary of the parameter space. The asymptotic null distribution of the likelihood ratio test statistic $2 \log( L_n)$ is a mixture of two chi-squared distributions; see \citet{self1987}. We reject the null hypothesis of the one-factor model at a significance level of $\alpha = 0.05$, corresponding to a critical value of $1.92$. 

Table~\ref{res1} shows estimates and standard errors for the parameters of model~(2b), together with implied estimates of the marginal default probabilites $\pi_r$ and the joint default probabilities $\pi_{rs}$, obtained using expressions \eqref{pir} and \eqref{pirs}.
Both $\widehat{\pi}_r$ and $\widehat{\pi}_{rs}$ match the nonparametric ones reasonably well.

A visual test is presented in the form of prediction intervals for the numbers of defaults, $M_{r,j}$. We simulate 5000 realizations of $Q_{BB},Q_B,Q_{CCC}$, accounting for the correlation structure, i.e., we simulate 5000 realizations of model (2b) using the parameter values that we obtained for the corresponding model. Using $Q_{BB},Q_B,Q_{CCC}$ we simulate 5000 realizations of $M_{r,j}$, where $m_{r,j}$ is given by the credit risk data. Finally, we calculate the prediction intervals, obtained by isolating the 4500 central realizations. The results are presented in Figure~\ref{fig:pred1}. 
The observed number of defaults generally stays within the prediction intervals; departures are in line with the 90\%
confidence level. These intervals provide the risk manager with useful ranges for the number of defaults.

It is also interesting to compare the distribution function of the conditional default probabilities $Q_r(\Psivec)$ for models
(1a)--(1b) to models (2a)--(2b). Figure \ref{fig2} shows the survival functions of $Q_r (\Psivec)$ for $r \in \{\text{BB}, \text{B}, \text{CCC} \}$.
The fatness of the right tail of $Q_r(\Psivec)$ greatly distinguishes the Gumbel models from the Normal ones, even if
all distribution functions agree to a large extent around the mean value. The impact of replacing the traditional normally distributed latent factor
with a Gumbel one is clearly visible for high quantiles.
\begin{table}[t]
\centering
\begin{tabular}{lccccc}
\toprule
Model & \# parameters & $- \log L_n$ & AIC & BIC  \\
\midrule
(1a) & 6 & 154.707 & 321.41 & 316.05 \\
(2a) & 8 & 154.445 & 324.89 & 317.74 \\
(1b) & 6 & 154.517 & 321.03 & 315.67 \\
(2b) & 7 & 153.138 & 320.28 & 314.02 \\
\bottomrule
\end{tabular}
  \caption{\citet{sp2001} data: overview of the number of parameters, the negative log-likelihood, AIC, and BIC for the four parametric models.}
  \label{AIC}
\end{table}

\begin{table}[ht]
\centering
\begin{tabular}{lcccc}
\toprule
\text{ } & BB & B & CCC \\
\midrule
$\mu_r$ & $-1.66$ $(0.07)$ & $-1.18$ $(0.04)$ & $-0.54$ $(0.07)$  \\
$\nu_r$ & $-1.73$ $(0.11)$ & --- & --- & \\
$\sigma_r$ & $0.112$ $(0.033)$ & $0.124$ $(0.029)$& $0.162$ $(0.053)$  \\
\midrule
$\widehat{\pi}_r$ & $0.0109$ & $0.0520$ & $0.2120$ \\
\midrule
$\widehat{\pi}_{rs} \times 1000$ & BB & B & CCC  \\
\midrule
BB & $0.215$ & $0.781$ & $2.795$  \\
B &  $0.781$ & $3.512$ & $12.96$  \\
CCC & $2.795$ & $12.96$ & $49.73$  \\
\bottomrule
\end{tabular}
  \caption{\citet{sp2001} data: maximum likelihood parameter estimates and standard errors for the Gumbel max-factor model (2b),
together with the implied estimates of default probabilities.}
  \label{res1}
\end{table}

\begin{figure}[ht]
\centering
\subfloat{\includegraphics[width=0.7\textwidth]{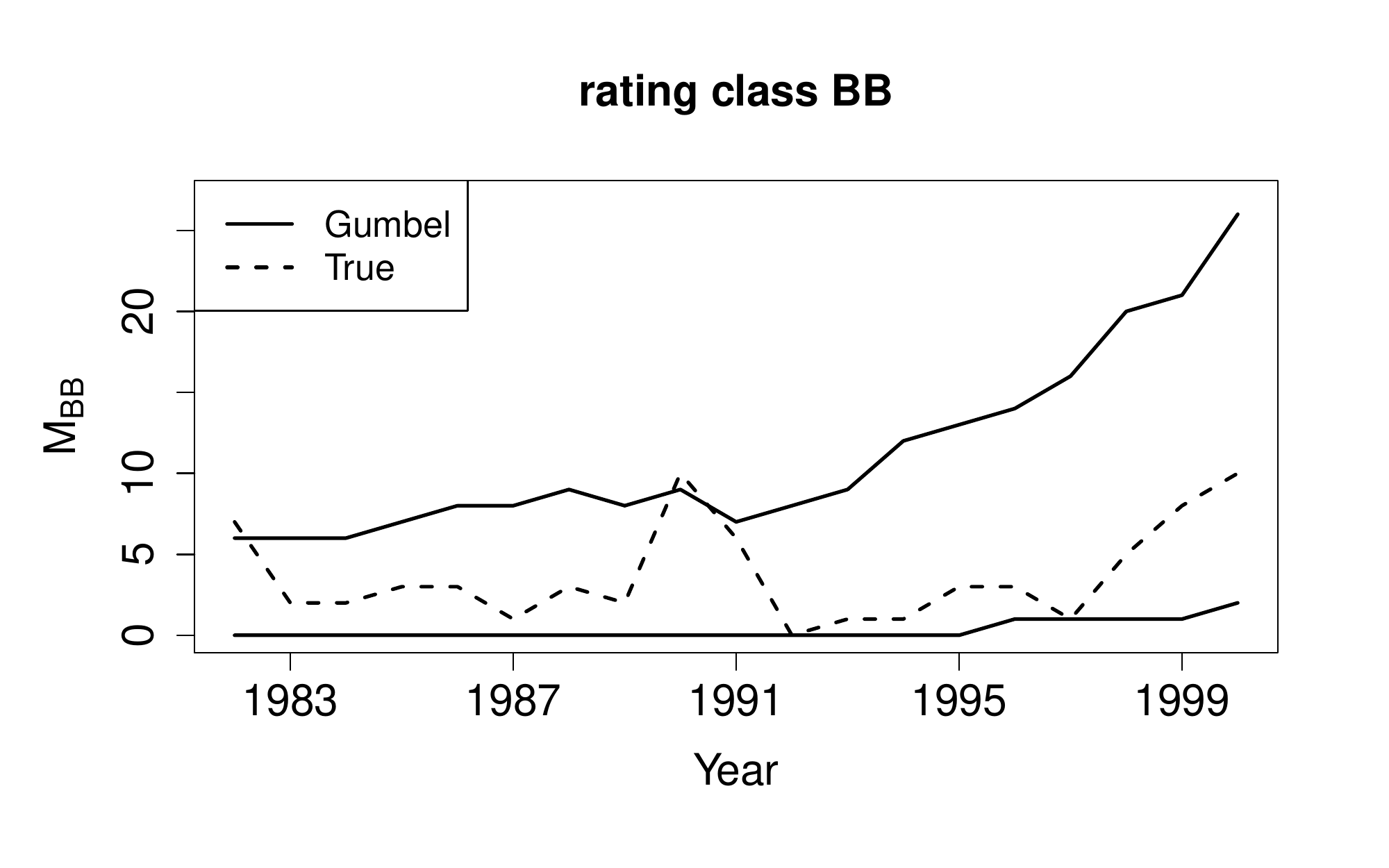}} \\ \vspace{-1cm}
\subfloat{\includegraphics[width=0.7\textwidth]{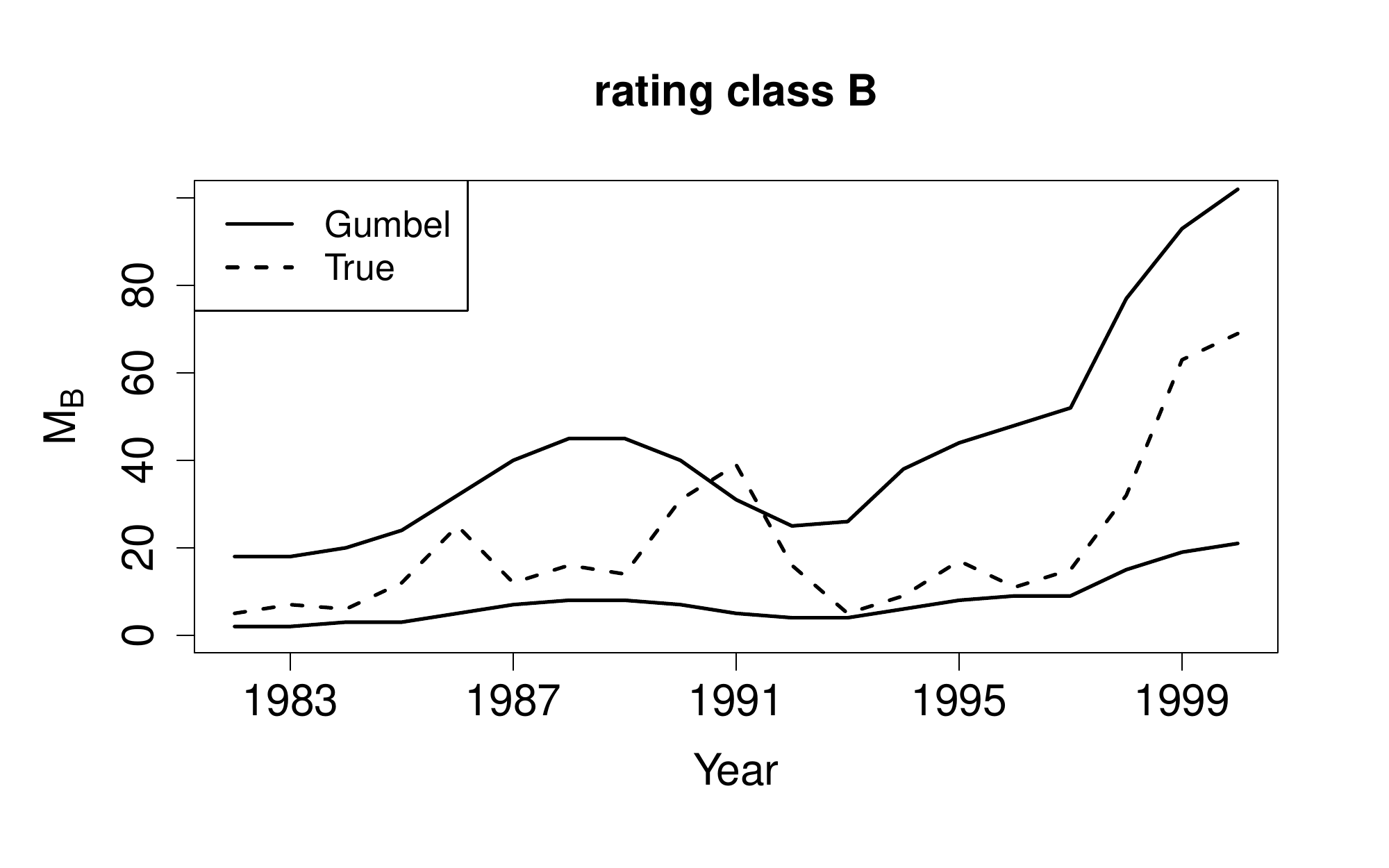}} \\ \vspace{-1cm}
\subfloat{\includegraphics[width=0.7\textwidth]{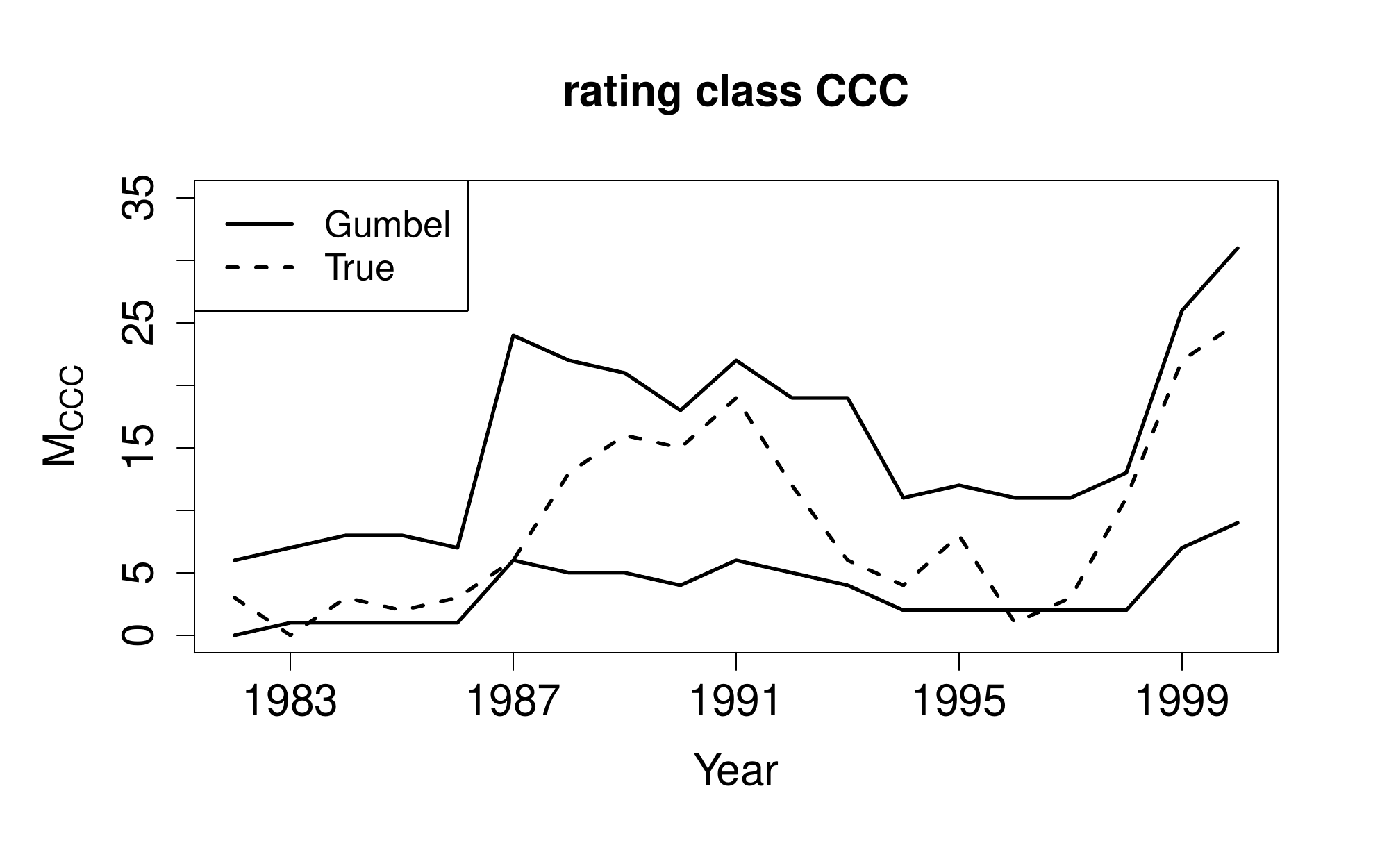}} 
\caption{\citet{sp2001} data: prediction intervals for the number of defaults obtained by simulating 5000 default matrices $M_{r,j}$ from $Q_{BB},Q_{B},Q_{CCC}$ and isolating the 4500 central observations for the Gumbel max-factor model. The dashed lines show the actual number of defaults.}
\label{fig:pred1}
\end{figure}

\begin{figure}[ht]
\centering
\subfloat{\includegraphics[width=0.7\textwidth]{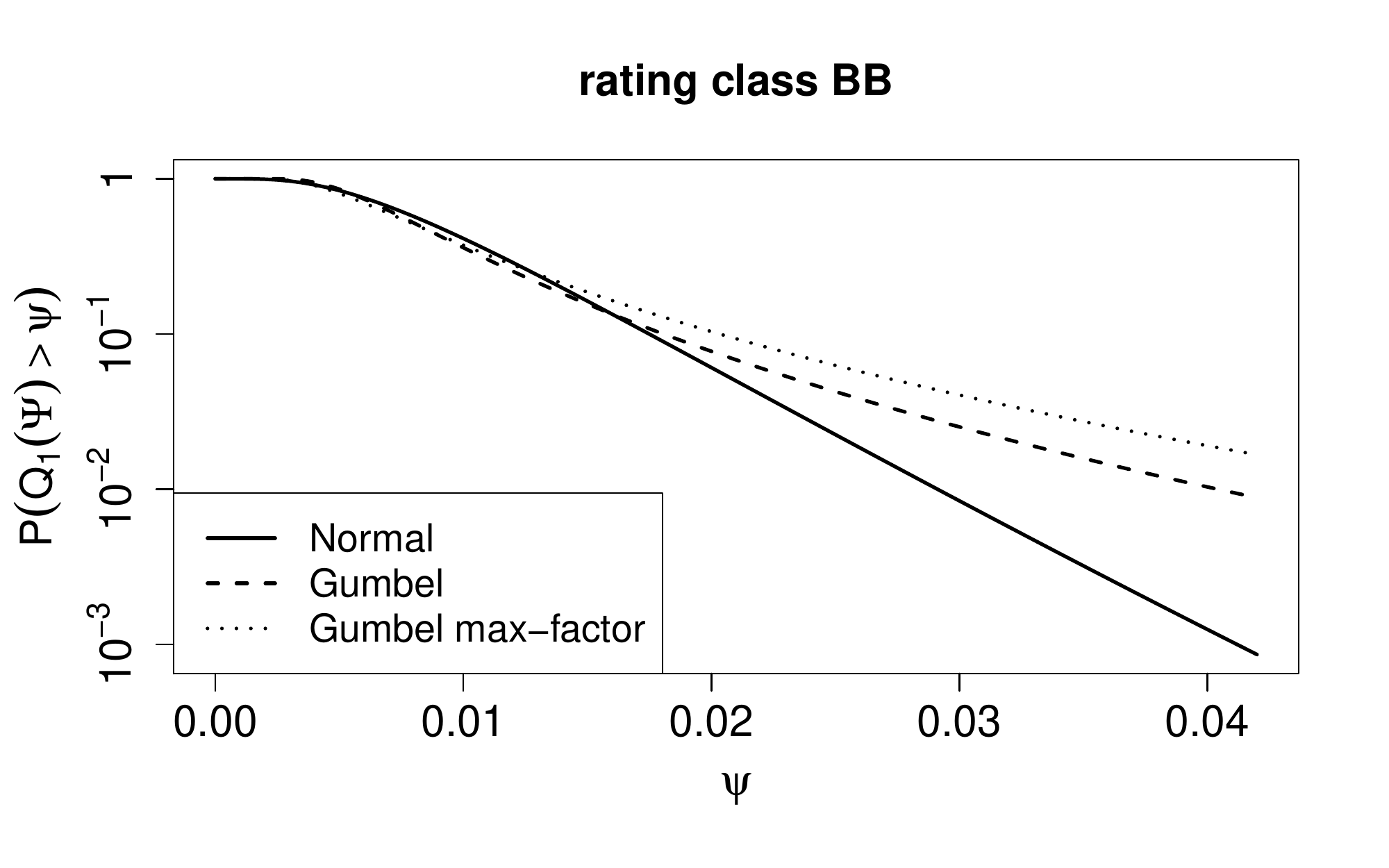}} \\ \vspace{-1cm}
\subfloat{\includegraphics[width=0.7\textwidth]{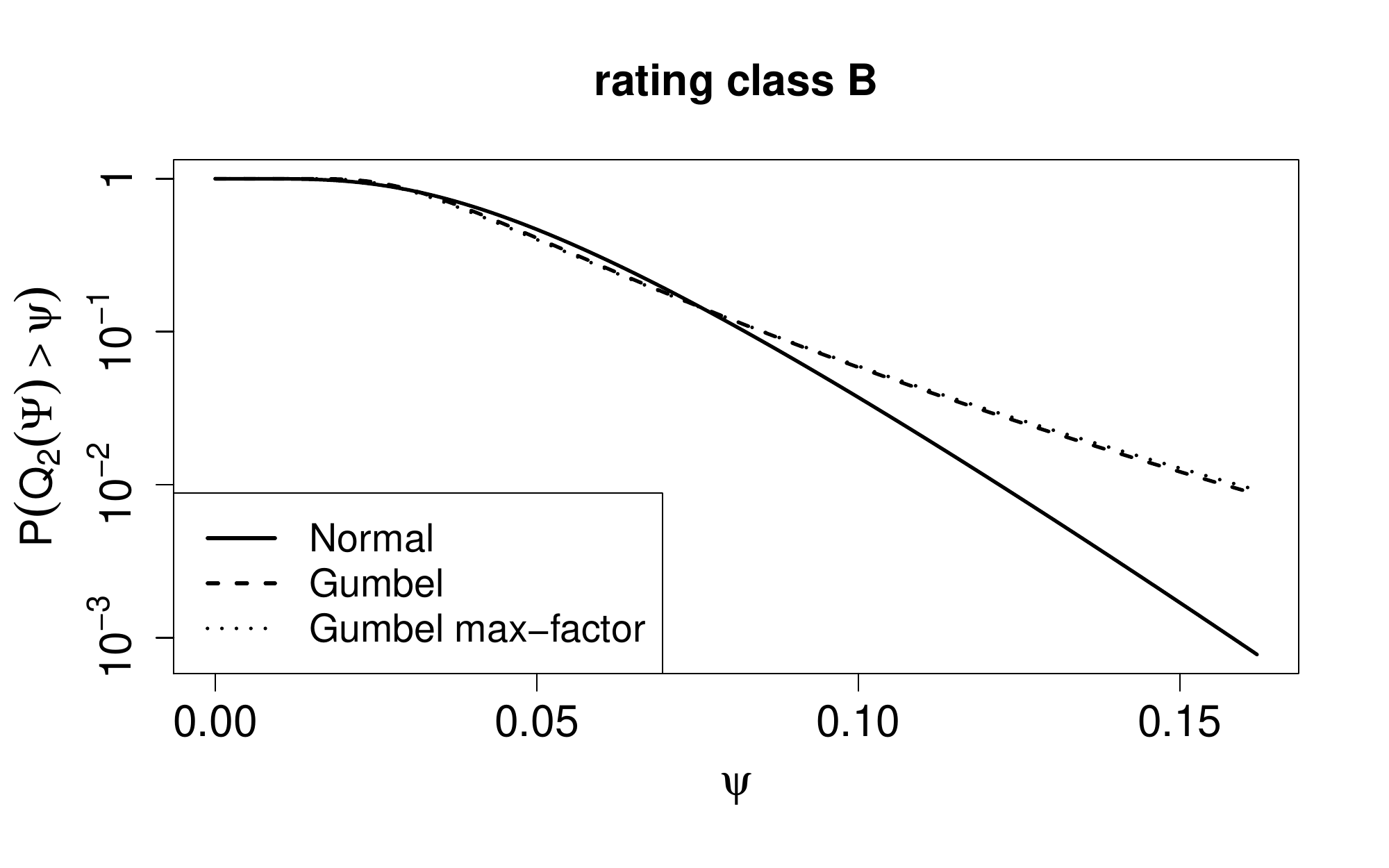}} \\ \vspace{-1cm}
\subfloat{\includegraphics[width=0.7\textwidth]{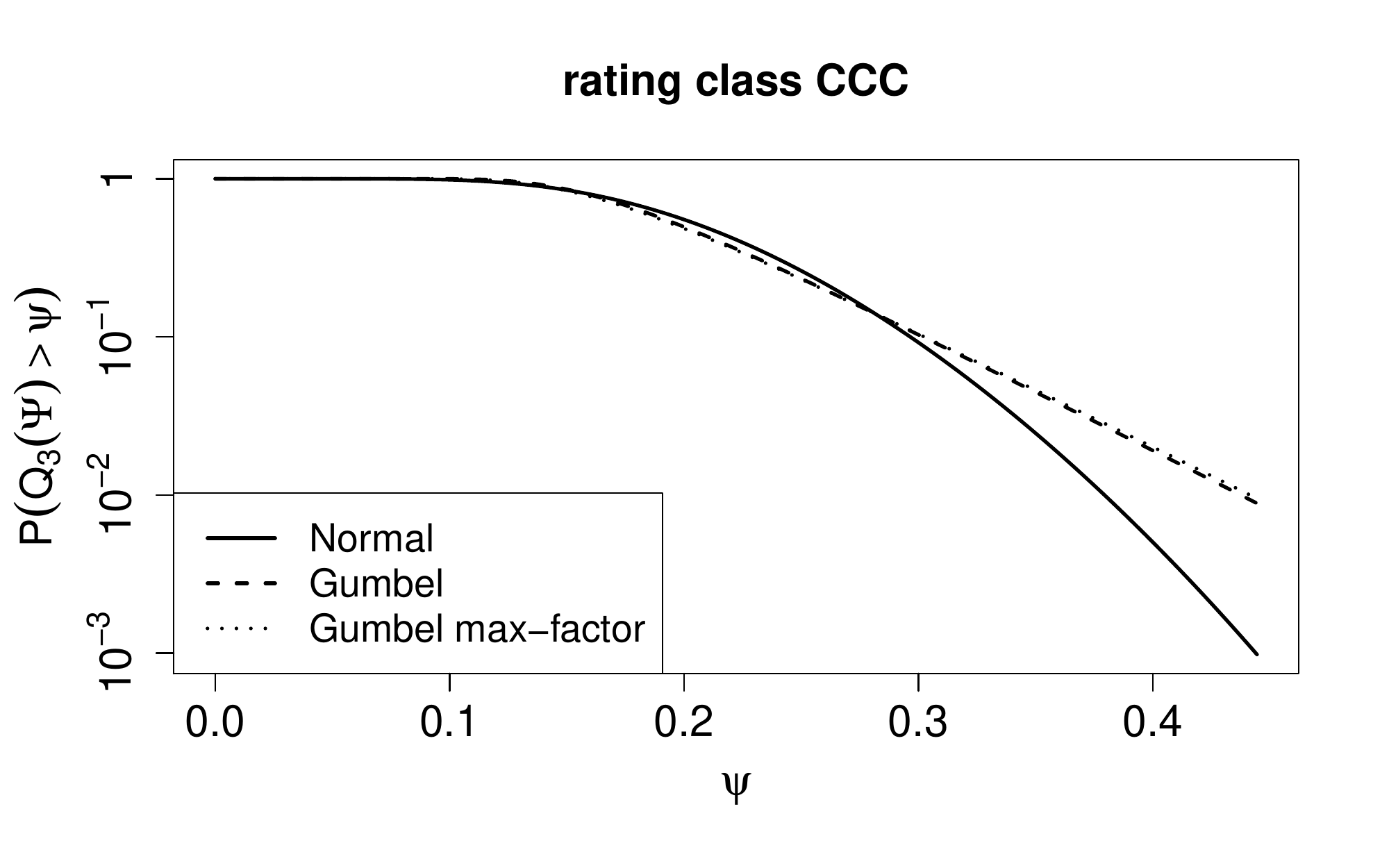}} 
\caption{\citet{sp2001} data: excess probabilities for conditional default probabilities $Q_r (\bm{\Psi})$ for $r \in \{\text{BB}, \text{B}, \text{CCC}\}$.}
\label{fig2}
\end{figure}

\clearpage

\section{Discussion}\label{sec:discuss}

In this paper, we have proposed max-factor models to account for dependencies between individual loss occurrence indicators.
Compared to the more traditional approach where the correlation is induced by linear combinations of random effects, the
max-factor specification prohibits diversification or compensation between hidden factors as only the largest effect controls
individual risk levels. The max-factor specification appears to be particularly appealing to model the occurrence of shocks
affecting policies in the portfolio, as well as the effect of common economic conditions. Compared to previous literature,
these shocks increase the conditional loss probability without systematically inducing losses on all the contracts.

This new model produces a good fit on the \citet{sp2001} credit risk data set. 
Besides classical goodness-of-fit measures based on the log-likelihood (such as AIC and BIC), we have also
proposed novel nonparametric estimators, minimizing the mean squared error, that can be used as a benchmark to evaluate
the relative merits of the different models.

The max-factor decomposition may also be interesting for credibility models decomposing the individual unobservable risk
proneness in a hierarchical way. Again, this approach is desirable in situations where no compensation is possible between the random effects associated to
the different levels but the worst case drives the individual risk proneness. We leave this topic for a future investigation.

\section*{Acknowledgements}

The authors gratefully acknowledge the financial support from the contract  ``Projet d'Actions de Recherche Concert\'ees'' No 12/17-045 of the ``Communaut\'e fran{\c c}aise de Belgique'', granted by the ``Acad\'emie universitaire Louvain'', and from IAP research network Grant P7/06 of the Belgian government (Belgian Science Policy). The second author gratefully acknowledges funding from the Belgian Fund for Scientific Research (F.R.S.-FNRS).
 
\appendix

\section{Proofs}\label{sec:proofs}


In order to establish the validity of Theorem~\ref{thmopt0}, we will need expressions for some moments of the random variables $M_{r,j}$.

\begin{Lemma}\label{moments}
For $r \in \{1,\ldots,k\}$, we have
\begin{enumerate}
\item $\EE [M_{r,j}] = m_{r,j} \pi_r$;
\item $\EE [M_{r,j} (M_{r,j} - 1)]  = m_{r,j} (m_{r,j} - 1) \pi_{rr} $;
\item $\mathrm{Var} [M_{r,j}]  = m_{r,j} (\pi_r + (m_{r,j} - 1) \pi_{rr} - m_{r,j}^2 \pi_r^2)$;
\item $\begin{aligned}[t] \mathrm{Var} [M_{r,j} (M_{r,j} - 1)]  = & \, m_{r,j} (m_{r,j} - 1) \bigl[ (2 - m_{r,j} (m_{r,j} - 1) \pi_{rr}) \pi_{rr} \\ 
& \, + \, 4 (m_{r,j} - 2) \pi^{(3)}_{r} + (m_{r,j} -2) (m_{r,j} -3) \pi^{(4)}_{r} \bigr];
\end{aligned}$
\end{enumerate}
and for $r,s \in \{1,\ldots,k\}$, $r \neq s$,
\begin{enumerate}
\item[5.] $\EE [M_{r,j} M_{s,j}]  = m_{r,j} m_{s,j} \pi_{rr}$;
\item[6.] $\begin{aligned}[t] \mathrm{Var} [M_{r,j} M_{s,j}]  = & \, m_{r,j} m_{s,j} \bigl[ (1 - m_{r,j} m_{s,j} \pi_{rs}) \pi_{rs} + (m_{s,j} - 1) \pi^{(1,2)}_{rs} \\
& \, + \, (m_{r,j} - 1) \pi^{(2,1)}_{rs} + (1 - m_{s,j} - m_{r,j} + m_{r,j} m_{s,j} \pi^{(2,2)}_{rs} ) \bigr].
\end{aligned}$ 
\end{enumerate}
\end{Lemma}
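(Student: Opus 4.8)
The plan is to compute every moment by the tower property, conditioning on the vector of random default probabilities $\bm{Q}_j$, since conditionally on $\bm{Q}_j$ each $M_{r,j}$ is binomial with $m_{r,j}$ trials and success probability $Q_{r,j}$, and the $M_{r,j}$ for distinct $r$ are independent. The workhorse is the binomial factorial-moment identity \eqref{eq:binmoment}, which gives $\EE[M_{r,j}(M_{r,j}-1)\cdots(M_{r,j}-\ell+1)\mid\bm{Q}_j]=m_{r,j}(m_{r,j}-1)\cdots(m_{r,j}-\ell+1)\,Q_{r,j}^{\ell}$. Taking the outer expectation and recalling $\pi_r^{(\ell)}=\EE[Q_{r,j}^{\ell}]$ (with $\pi_r^{(1)}=\pi_r$ and $\pi_r^{(2)}=\pi_{rr}$) immediately yields items 1 and 2. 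For item 5 I would use conditional independence to write $\EE[M_{r,j}M_{s,j}\mid\bm{Q}_j]=m_{r,j}Q_{r,j}\cdot m_{s,j}Q_{s,j}$ and then take the outer expectation, using $\EE[Q_{r,j}Q_{s,j}]=\pi_{rs}$.

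For the two simpler variances I would reduce second moments to factorial moments via $M^2=M(M-1)+M$. Thus $\EE[M_{r,j}^2\mid\bm{Q}_j]=m_{r,j}(m_{r,j}-1)Q_{r,j}^2+m_{r,j}Q_{r,j}$, and subtracting $(\EE[M_{r,j}])^2=m_{r,j}^2\pi_r^2$ after taking expectations gives item 3. For item 6 the same device applied in each class, combined with conditional independence, gives $\EE[(M_{r,j}M_{s,j})^2\mid\bm{Q}_j]=\bigl(m_{r,j}(m_{r,j}-1)Q_{r,j}^2+m_{r,j}Q_{r,j}\bigr)\bigl(m_{s,j}(m_{s,j}-1)Q_{s,j}^2+m_{s,j}Q_{s,j}\bigr)$; expanding the product, taking the outer expectation so that the four monomials in $Q_{r,j},Q_{s,j}$ become $\pi_{rs}^{(2,2)},\pi_{rs}^{(2,1)},\pi_{rs}^{(1,2)},\pi_{rs}$, and subtracting $(\EE[M_{r,j}M_{s,j}])^2=m_{r,j}^2m_{s,j}^2\pi_{rs}^2$ yields item 6 after factoring out $m_{r,j}m_{s,j}$.

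Item 4 is the only one that needs a genuine algebraic step beyond $M^2=M(M-1)+M$. Here I would express the square of the falling factorial in the falling-factorial basis: a short computation gives the identity $\bigl(M(M-1)\bigr)^2=M^{(4)}+4M^{(3)}+2M^{(2)}$, where $M^{(\ell)}=M(M-1)\cdots(M-\ell+1)$. Applying \eqref{eq:binmoment} term by term and then the outer expectation turns the right-hand side into $m_{r,j}^{(4)}\pi_r^{(4)}+4m_{r,j}^{(3)}\pi_r^{(3)}+2m_{r,j}^{(2)}\pi_{rr}$; subtracting $(\EE[M_{r,j}(M_{r,j}-1)])^2=\bigl(m_{r,j}(m_{r,j}-1)\bigr)^2\pi_{rr}^2$ and factoring out $m_{r,j}(m_{r,j}-1)$ reproduces the stated expression.

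The conceptual content is entirely contained in the conditioning argument and the factorial-moment identity; the main obstacle is purely bookkeeping, namely deriving the expansion $(M(M-1))^2=M^{(4)}+4M^{(3)}+2M^{(2)}$ correctly (equivalently, matching coefficients when passing from ordinary powers to falling factorials) and then collecting terms in items 4 and 6 so that the common factors $m_{r,j}(m_{r,j}-1)$ and $m_{r,j}m_{s,j}$ factor out cleanly. No subtle probabilistic issue arises, since all the $M_{r,j}$ are bounded, so every expectation is finite and the interchange of conditional and unconditional expectations is justified by the tower property.
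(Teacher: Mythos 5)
Your proof is correct and follows essentially the same route as the paper's: conditioning on $\bm{Q}_j$, invoking the binomial factorial-moment identity \eqref{eq:binmoment}, and exploiting conditional independence of $M_{r,j}$ and $M_{s,j}$ given $\bm{Q}_j$ for items 5 and 6. Your key identity $\bigl(M(M-1)\bigr)^2 = M^{(4)} + 4M^{(3)} + 2M^{(2)}$ (with $M^{(\ell)} = M(M-1)\cdots(M-\ell+1)$) is exactly the expansion the paper arrives at in item 4 — your version is in fact cleaner, since the paper's intermediate conversion for $\EE[M_{r,j}^4]$ contains a typo ($6\EE[M_{r,j}^3]$ where $6\EE[M_{r,j}(M_{r,j}-1)(M_{r,j}-2)]$ is meant) — and your expansion in item 6 correctly identifies the coefficient of $\pi_{rs}^{(2,2)}$ as $(m_{r,j}-1)(m_{s,j}-1)$, which is how the misplaced parenthesis in the stated formula (and the corresponding $\pi_{rr}$ versus $\pi_{rs}$ slip in item 5) should be read.
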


\begin{proof}[Proof of Lemma \ref{moments}]
\begin{enumerate}
\item $\EE [M_{r,j}] = \EE [ \EE [M_{r,j} \mid \Qvec_j ]] = \EE[m_{r,j} Q_{r,j}] = m_{r,j} \pi_r.$
\item $\begin{aligned}[t] \EE [M_{r,j} (M_{r,j} - 1)] & = \EE[ \EE [ M_{r,j} (M_{r,j} -1) \mid \Qvec_j ]] \\
& = \EE[m_{r,j} (m_{r,j} - 1) Q_{r,j}^2 ]\\
&  = m_{r,j} (m_{r,j} - 1) \pi_{rr},
\end{aligned}$

where the second step follows from equation \eqref{eq:binmoment}.
\item $\begin{aligned}[t] \mathrm{Var} [M_{r,j}] = \EE [M_{r,j}^2] - \EE [M_{r,j}]^2 & = 
\EE [M_{r,j} (M_{r,j} - 1) ] + \EE [M_{r,j}] - \EE[M_{r,j}]^2 \\
& = m_{r,j} (m_{r,j} - 1) \pi_{rr} + m_{r,j} \pi_r - m_{r,j}^2 \pi_r^2.
\end{aligned}$
\item We first note that
\begin{align*}
\EE [M_{r,j}^3] & = \EE [M_{r,j} (M_{r,j} - 1) (M_{r,j} - 2)] + 3 \EE[M_{r,j} (M_{r,j} - 1)] + \EE [M_{r,j}], \\
\EE [M_{r,j}^4] & = \EE [M_{r,j} (M_{r,j} - 1) (M_{r,j} - 2) (M_{r,j} -3)] + 6 \EE [M_{r,j}^3] + 7 \EE [M_{r,j} (M_{r,j} - 1)] + \EE [M_{r,j}].
\end{align*}
Then, using again equation \eqref{eq:binmoment}, 
\begin{align*} 
\mathrm{Var} [M_{r,j} (M_{r,j} - 1)] & = \EE[M_{r,j}^4] - 2 \EE[M_{r,j}^3] + \EE [M_{r,j}^2] - \EE [M_{r,j} (M_{r,j} - 1)]^2 \\
& = \EE[M_{r,j} (M_{r,j} - 1) (M_{r,j} - 2) (M_{r,j} - 3)] + 2 \EE [M_{r,j} (M_{r,j} - 1)] \\
& \qquad + 4 \EE [M_{r,j} (M_{r,j} - 1) (M_{r,j} -2 )] - \EE [M_{r,j} (M_{r,j} - 1)]^2 \\ 
& =  m_{r,j} (m_{r,j} - 1) \bigl[ (2 - m_{r,j} (m_{r,j} - 1) \pi_{rr}) \pi_{rr} \\ 
& \qquad +  4 (m_{r,j} - 2) \pi^{(3)}_{r} + (m_{r,j} -2) (m_{r,j} -3) \pi^{(4)}_{r} \bigr].
\end{align*}
\item $\EE [M_{r,j} M_{s,j}] = \EE [ \EE [M_{r,j} M_{s,j} \mid \Qvec_j ]] = \EE [m_{r,j} m_{s,j} Q_{r,j} Q_{s,j} ] = m_{r,j} m_{s,j} \pi_{rs}.$
\item $\begin{aligned}[t] \mathrm{Var} [ M_{r,j} M_{s,j}] & = \EE [ \EE [M_{r,j}^2 \mid \Qvec_j ] \, \EE [M_{s,j}^2 \mid \Qvec_j ]] - \EE [M_{r,j} M_{s,j}]^2 \\
& = m_{r,j} m_{s,j} \big[ (1 - m_{r,j} m_{s,j} \pi_{rs}) \pi_{rs} + (m_{s,j} - 1) \pi^{(1,2)}_{rs} \\
& \qquad +  (m_{r,j} - 1) \pi^{(2,1)}_{rs} + (1 - m_{s,j} - m_{r,j} + m_{r,j} m_{s,j} \pi^{(2,2)}_{rs} ) \big].
\end{aligned}$
\end{enumerate}
\end{proof}

We are now ready to proceed to the proof of the announced result.

\begin{proof}[Proof of Theorem~\ref{thmopt0}]
Write the estimators of $\pi_r$ as
\[
  \widetilde{\pi}_r (\wvec_r) = \sum_{j=1}^n w_{r,j} \widehat{Q}_{r,j}, \qquad r \in \{1,\ldots,k\}, 
\]
where $\widehat{Q}_{r,1},\ldots,\widehat{Q}_{r,n}$ have common expectation $\EE[ \widehat{Q}_{r,j} ] = \pi_r$ (Lemma~\ref{moments}, item 1) and possibly different variances $\mathrm{Var} [ \widehat{Q}_{r,j} ] = \sigma_{r,j}^2$, where
\[
  \sigma_{r,j}^2 
  = \frac{1}{m_{r,j}^2} \mathrm{Var} [ M_{r,j} ]
  =  \frac{\pi_r}{m_{r,j}} + \left( 1 - \frac{1}{m_{r,j}} \right) \pi_{rr} - \pi_r^2,
\]
by Lemma~\ref{moments} (item 3) and where the weight vector $\wvec_r = (w_{r,1}, \ldots, w_{r,n})$ has nonnegative entries. We wish to minimize the mean squared error (MSE) of $\widetilde{\pi}_r (\bm{w}_r)$ as a function of $\wvec_r$,
\begin{align*}
  \mathrm{MSE}[ \widetilde{\pi}_r(\wvec_r) ] 
  &= \mathrm{Var} [ \widetilde{\pi}_r(\wvec_r) ] + \left( \EE[ \widetilde{\pi}_r(\wvec_r) - \pi_r ] \right)^2 \\
  &= \sum_{j=1}^n w_{r,j}^2 \sigma_{r,j}^2 + \biggl( \sum_{j=1}^n w_{r,j} - 1 \biggr)^2 \mu_r^2.
  \end{align*}
Setting the partial derivatives with respect to $w_{r,1},\ldots,w_{r,n}$ equal to zero gives the solution
\begin{align*}
  w_{r,j,\mathrm{opt}} &= \frac{\sigma_{r,j}^{-2}}{\pi_r^{-2} + \sum_{t=1}^n \sigma_{r,t}^{-2}}, \qquad j = 1, \ldots, n, \\
  \mathrm{MSE}[ \widetilde{\pi}_r ( \wvec_{r, \opt} ) ] & = \frac{1}{ \pi_r^{-2} + \sum_{j=1}^n \sigma_{r,j}^{-2} },
\end{align*}
where $\wvec_{r,\opt} = (w_{r,1,\opt},\ldots,w_{r,n,\opt})$.
For the second-order probabilities $\pi_{rr}$ and $\pi_{rs}$, the same recipe can be followed. Their estimators will use the quantities $\mathrm{Var} [ M_{r,j} (M_{r,j} - 1)]$ and $\mathrm{Var} [M_{r,j} M_{s,j}]$. These are calculated in Lemma~\ref{moments} (items 4 and 6).
\end{proof}

\begin{Remark}\label{remrk}
In practice, the estimators $\widehat{\sigma}_{r,j}^2$ and $ \widehat{\sigma}_{rr,j}^2$ from Definition~\ref{thmopt} can be negative. When this happens,
a simple solution is to replace the preliminary estimator $\widehat{\pi}_{r}^{(\ell)}$ by the estimator 
\begin{equation*}
\check{\pi}_{r}^{(\ell)} = \frac{1}{n} \sum_{j=1}^n \frac{M_{r,j}^{\ell}}{m_{r,j}^{\ell}}.
\end{equation*}
Note that $\check{\pi}_r^{(\ell)} > \widehat{\pi}_r^{(\ell)}$ for $\ell > 1$.
Using $\check{\pi}_{r}^{(\ell)}$ instead of $\widehat{\pi}_r^{(\ell)}$ as a preliminary estimator in Definition~\ref{thmopt} ensures that both $\widehat{\sigma}_{r,j}^2$ and  $\widehat{\sigma}_{rr,j}^2$ are positive. Asymptotically as $m_{r,j} \to \infty$, this method is equivalent to the one described in Definition~\ref{thmopt0}.
\end{Remark}

\begin{proof}[Proof of Lemma \ref{lemma1}]
To prove Lemma~\ref{lemma1} for every $k \in \mathbb{N}$, first observe that
\begin{equation*}
 \nu_r -\sigma_r \log \left( -\log(q_r) \right) >  \mu_r -\sigma_r \log \left( -\log(q_0) \right)   \,\,\, \iff \,\,\, q_r > g_r(q_0). 
\end{equation*}
Suppose $k=1$. Then $R = \{1\}$ and
\begin{align*}
I_j & = \int_0^1 \int_0^1 f_{1,j} (q_0,q_1) \, \textrm{d}q_1 \textrm{d} q_0 \\
& =  \int_0^1 \int_0^1 \text{I} \left[q_1 \leq g_1 (q_0) \right] h_{1,j} (q_0;\mu_1) + 
\text{I} \left[ q_1 > g_1 (q_0) \right] h_{1,j} (q_1;\nu_1) \textrm{d}q_1 \textrm{d}q_0\\
& = \int_0^1 g_1(q_0) h_{1,j}(q_0 ; \mu_1) \textrm{d}q_0 + \int_0^1 \int_{g_1(q_0)}^1 h_{1,j} (q_1; \nu_1) \, \textrm{d}q_1 \textrm{d}q_0,
\end{align*}
which is equal to \eqref{eq:finallik} since $\mathcal{P}(R) = \{\emptyset, \{1\} \}$.
Next, define $R_k = \{1,\ldots,k\}$ and assume that \eqref{eq:finallik} is valid. Then for $R_{k+1} = \{1,\ldots,k+1\}$,
\begin{align*}
I_j & = \int_{[0,1]^{2}} f_{k+1,j} (q_0,q_{k+1}) \left( \int_{[0,1]^k} \left( \prod_{r=1}^k f_{r,j} (q_0,q_r) \right) \, \textrm{d}q_1 \cdots \textrm{d}q_k \right) \textrm{d}q_{k+1} \textrm{d}q_0 \\
& = \int_0^1 g_{k+1} (q_0) h_{k+1,j} (q_0; \mu_{k+1}) \left( \int_{[0,1]^k} \left( \prod_{r=1}^k f_{r,j} (q_0,q_r) \right) \, \textrm{d}q_1 \cdots \textrm{d}q_k \right) \textrm{d} q_0 \\
& \,\,\,\,\, + \int_0^1 \int_{g_{k+1} (q_0)}^1 h_{k+1,j} (q_{k+1}; \tau_{k+1}) \left( \int_{[0,1]^k} \left( \prod_{r=1}^k f_{r,j} (q_0,q_r) \right) \, \textrm{d}q_1 \cdots \textrm{d}q_k \right) \textrm{d}q_{k+1} \textrm{d} q_0 \\
& = \sum_{I \in \mathcal{P}(R_k)} \int_0^1 \left( \prod_{r \in \{R_k \setminus I\} \cup \{k+1\}} g_r (q_0) h_{r,j} (q_0; \mu_r) \right) \left(\, \prod_{r \in I} \int_{g_r (q_0)}^1 h_{r,j} (q_r; \nu_r) \textrm{d} q_r \right) \textrm{d} q_0 \\
& \,\,\,\,\, + \sum_{I \in \mathcal{P}(R_k)} \int_0^1  \left( \prod_{r \in R_k \setminus I} g_r (q_0) h_{r,j} (q_0;\mu_r) \right) \left(\, \prod_{r \in I \cup \{k+1\}} \int_{g_r (q_0)}^1 h_{r,j} (q_r; \nu_r) \textrm{d} q_r \right) \textrm{d}q_{k+1} \textrm{d} q_0 \\
& = \sum_{I \subset \mathcal{P}(R_{k+1})} \int_0^1  \left( \prod_{r \in R_{k+1} \setminus I} g_r (q_0) h_{r,j} (q; \mu_r) \right) \left(\, \prod_{r \in I} \int_{g_r (q_0)}^1 h_{r,j} (q_r; \nu_r) \textrm{d} q_r \right) \textrm{d}q_{k+1} \textrm{d} q_0.
\end{align*}
For the last step, note that if $I \in \mathcal{P}(R_{k+1})$, then either $I \in \mathcal{P}(R_k)$ so that $\{R_{k+1} \setminus I\} = \{R_k \setminus I\} \cup \{k+1\}$ 
and we get the first term on the penultimate line, or $I \notin \mathcal{P}(R_k)$ and we get the second term on the penultimate line.
\end{proof}

\section{Simulation study}\label{sec:simstudy}
In Section~\ref{sec:likelihood}, the parameter vector $\bm{\theta}$ of a factor model is estimated using maximum likelihood estimation, after which the implied marginal and joint default probabilities $\pi_r$ and $\pi_{rs}$ are obtained by plugging in the estimate of $\bm{\theta}$ in expressions~\eqref{pir} and \eqref{pirs}. In Section~\ref{sec:nonpar}, two nonparametric estimators of $\pi_r$ and $\pi_{rs}$ are introduced. Suppose that the conditional default probabilities $Q_{r,j}$ are generated from one of the multi-factor models in Section~\ref{sec:par}, i.e., model (2a) or (2b). We wish to answer the following questions.
\begin{itemize}
\item Do the weighted nonparametric estimators (Section~\ref{weight}) of $(\pi_r,\pi_{rs})$ perform better than the unweighted nonparametric estimators (Section~\ref{noweight})?
\item Does nonparametric estimation lead to better or worse estimates of $(\pi_r,\pi_{rs})$ than via maximum likelihood estimation of the parameters of the true model?
\item Does maximum likelihood estimation of the parameters of a one-factor submodel provide us with worse estimates of $(\pi_r,\pi_{rs})$ than maximum likelihood estimation of the parameters of the true model?
\item Does maximum likelihood estimation of the parameters of another multi-factor model lead to worse estimators of $(\pi_r,\pi_{rs})$ than maximum likelihood estimation of the parameters of the true model, or is the estimation quality of the (joint) default probabilities independent of the underlying data-generating process?
\end{itemize}

More specifically, we proceed as follows.
The number of risks, $m_{r,j}$, in risk category $r \in \{1, \ldots, k\}$ and time period $j \in \{1, \ldots, n\}$ is generated randomly using a beta-binomial model, for $k = 2$ and $n = 19$. The conditional default probabilities $Q_{r,j}$ are then generated using models (2a) and (2b), where the parameter values are chosen in such a way that $\pi_r$ and $\pi_{rs}$ very roughly resemble the default probabilities of the S\&P rating classes B and CCC; see Table~\ref{true}.
The quantities $\pi_r$ and $\pi_{rs}$ are then estimated by the two nonparametric estimators and by maximizing the likelihood under the assumption of one of the parametric models (1a), (2a), (1b), and (2b).  
We repeat this 1000 times and we compare the results using the relative root mean squared error (RRMSE), i.e., the root mean squared error divided by the true parameter value. 
Note that if the weighted nonparametric estimator leads to a negative value of $\widehat{\sigma}_{r,j}$ or $\widehat{\sigma}_{rr,j}$, we use the unweighted nonparametric estimator; see Remark~\ref{remrk}. This happens less than $1 \%$ of the time.

The results are presented in Tables~\ref{simstudy1} and \ref{simstudy2}. Table~\ref{simstudy1} shows the RRMSE of the estimators of the marginal default probabilities, $\pi_r$. The methods are compared in terms of the decrease, $\Delta$, of RRMSE in percent with respect to the best method. Consequently, the best method has $\Delta = 0$. When calculating $\Delta$, we take the sum of the values of the two categories.  
In terms of RRMSE, the weighted nonparametric estimator performs slightly better than the non-weighted one. Maximum likelihood estimation beats nonparametric estimation when the data are generated from model (2b), but it is the other way around when data are generated from model (2a). Misspecification of the model, for example, estimating the parameters of model (2a) although data are generated from model (2b), has no negative effect when data are generated from model (2b).

\begin{table}[t]
\centering
\begin{tabular}{lcclcc}
\toprule
\multicolumn{3}{c}{Gumbel} & \multicolumn{3}{c}{Normal} \\
\cmidrule(r){1-3}
\cmidrule(r){4-6}
\text{ } & $r=1$  & $r=2$  & \text{ } & $r=1$ & $r=2$ \\
\cmidrule(r){1-3}
\cmidrule(r){4-6}
$\mu_r$ & $-1.15$ & $-0.55$ & $\mu_r$ & $-1.60$ & $-0.85$     \\
$\nu_r$ & $-1.30$ & $-1.00$ & $\tau_r$ & $0.13$ & $0.16$  \\
$\sigma_r$ & $0.11$ & $0.15$ & $\sigma_r$ & $0.18$ & $0.28$  \\
\cmidrule(r){1-3}
\cmidrule(r){4-6}
$\pi_{r}$ & $0.0585$ & $0.2091$ & $\pi_r$ & $0.0591$ & $0.2093$  \\
\cmidrule(r){1-3}
\cmidrule(r){4-6}
$\pi_{rs} \times 100$ &  $0.397$ & $1.365$  & $\pi_{rs} \times 100$ &  $0.394$ & $1.401$  \\
 & $1.365$ & $4.759$ &  & $1.401$ & $4.980$    \\
\bottomrule
\end{tabular}
  \caption{Parameter values for the max-factor Gumbel model (left) and the sum-factor Normal model (right) used in the simulation study.}
  \label{true}
\end{table}

\begin{table}[t]
\centering
\begin{tabular}{lccclccc}
\toprule
\multicolumn{1}{l}{} &  \multicolumn{3}{c}{Gumbel} & \multicolumn{1}{l}{} &  \multicolumn{3}{c}{Normal} \\
\cmidrule(r){2-4}
\cmidrule(r){6-8}
& $r=1$ & $r=2$ & $\Delta$ & & $r=1$ & $r=2$ & $\Delta$  \\
\cmidrule(r){1-4}
\cmidrule(r){5-8}
 NP & 0.116 & 0.095 & 4 & NP & 0.110 & 0.119 & 1 \\
 NP weighted & 0.112 & 0.092 & 1 & NP weighted & 0.109 & 0.119 & 0 \\
\cmidrule(r){1-4}
\cmidrule(r){5-8}
 Model (1a) & 0.110 & 0.094 & 0 & Model (1a) & 0.112 & 0.119 & 2  \\
 Model (2a) & 0.110 & 0.094 & 0 & Model (2a) & 0.111 & 0.119 & 1  \\
 Model (1b) & 0.109 & 0.093 & 0 & Model (1b) & 0.120 & 0.121 & 6 \\
 Model (2b) & 0.110 & 0.094 & 0 & Model (2b) & 0.121 & 0.121 & 6 \\
\bottomrule
\end{tabular}
  \caption{Relative root mean squared error (RRMSE) of estimators of $\pi_r$ for data generated from a Gumbel max-factor model (left) and a Normal sum-factor model (right) with parameter values as in Table~\ref{true}. The methods are compared using $\Delta$, the increase of RRMSE in percent with respect to the best method, which has $\Delta = 0$.}
  \label{simstudy1}
\end{table}

Table~\ref{simstudy2} shows the RRMSE of the estimators of the joint default probabilities $\pi_{rs}$. Again, the weighted nonparametric estimator performs much better than the non-weighted estimator. Model misspecification has again less effect when data are generated from model (2b) than when data are generated from model (2a). Compared with the results in Table~\ref{simstudy1}, for the joint default probabilities we see a larger increase in RRMSE if we estimated the parameters of a one-factor submodel instead of a multi-factor model. 

A final thing worth noticing is that when estimating the parameters of models based on the normal distribution we obtain better estimators of low default probabilities (here $r=1$) than when using models based on the Gumbel distribution. For higher default probabilities  (here $r=2$), the quality of estimation differs less. The higher RRMSEs for the joint default probabilities based on the parameters of the Gumbel models are entirely caused by a rather high bias; while all estimators stemming from the Gumbel models exhibit (high) positive bias, the estimators stemming from the normal models are all negatively biased. Thus, although using factor models based on the normal distribution leads to estimators with lower RRMSE, it is an important drawback of models (1a) and (2a) that they are underestimating the true default probabilities.

\begin{table}[ht]
\centering
\begin{tabular}{lccclccc}
\toprule
\multicolumn{1}{l}{} & \multicolumn{3}{c}{Gumbel} & \multicolumn{1}{l}{} & 
\multicolumn{3}{c}{Normal} \\
\cmidrule(r){2-4}
\cmidrule(r){6-8}
\multicolumn{1}{l}{NP} & \multicolumn{1}{c}{$r=1$} & \multicolumn{1}{c}{$r=2$} & \multicolumn{1}{c}{$\Delta$} & \multicolumn{1}{l}{NP} & \multicolumn{1}{c}{$r=1$} & \multicolumn{1}{c}{$r=2$}  & \multicolumn{1}{c}{$\Delta$} \\
\cmidrule(r){1-4}
\cmidrule(r){5-8}
$r=1$ &  0.316 & 0.222  & 13 & $r=1$ & 0.253 & 0.207 & 7\\
$r=2$ & 0.222 & 0.211 &  & $r=2$& 0.207 & 0.255 &  \\
\cmidrule(r){1-4}
\cmidrule(r){5-8}
\multicolumn{1}{l}{NP weighted} & \multicolumn{1}{c}{$r=1$} & \multicolumn{1}{c}{$r=2$} & \multicolumn{1}{c}{$\Delta$} & \multicolumn{1}{l}{NP weighted} & \multicolumn{1}{c}{$r=1$} & \multicolumn{1}{c}{$r=2$}  & \multicolumn{1}{c}{$\Delta$} \\
\cmidrule(r){1-4}
\cmidrule(r){5-8}
$r=1$  & 0.266 & 0.202  & 0 & $r=1$  & 0.230 & 0.202 & 0 \\
 $r=2$& 0.202 & 0.197 &   & $r=2$ & 0.202 & 0.238 &  \\
\cmidrule(r){1-4}
\cmidrule(r){5-8}
\multicolumn{1}{l}{Model (1a)} & \multicolumn{1}{c}{$r=1$} & \multicolumn{1}{c}{$r=2$} & \multicolumn{1}{c}{$\Delta$} & \multicolumn{1}{l}{Model (1a)} & \multicolumn{1}{c}{$r=1$} & \multicolumn{1}{c}{$r=2$}  & \multicolumn{1}{c}{$\Delta$} \\
\cmidrule(r){1-4}
\cmidrule(r){5-8}
$r=1$  & 0.263 & 0.204  & 1 & $r=1$ & 0.263 & 0.216 & 8 \\
$r=2$ & 0.204 & 0.202 &  & $r=2$& 0.216 & 0.245 & \\
\cmidrule(r){1-4}
\cmidrule(r){5-8}
\multicolumn{1}{l}{Model (2a)} & \multicolumn{1}{c}{$r=1$} & \multicolumn{1}{c}{$r=2$} & \multicolumn{1}{c}{$\Delta$} & \multicolumn{1}{l}{Model (2a)} & \multicolumn{1}{c}{$r=1$} & \multicolumn{1}{c}{$r=2$}  & \multicolumn{1}{c}{$\Delta$} \\
\cmidrule(r){1-4}
\cmidrule(r){5-8}
$r=1$  & 0.258 & 0.204  & 0 & $r=1$  & 0.248 & 0.211 & 5 \\
$r=2$ & 0.204 & 0.202 &  &$r=2$& 0.211 & 0.243 &  \\
\cmidrule(r){1-4}
\cmidrule(r){5-8}
\multicolumn{1}{l}{Model (1b)} & \multicolumn{1}{c}{$r=1$} & \multicolumn{1}{c}{$r=2$} & \multicolumn{1}{c}{$\Delta$} & \multicolumn{1}{l}{Model (1b)} & \multicolumn{1}{c}{$r=1$} & \multicolumn{1}{c}{$r=2$}  & \multicolumn{1}{c}{$\Delta$} \\
\cmidrule(r){1-4}
\cmidrule(r){5-8}
$r=1$  & 0.289 & 0.210  & 6 &  $r=1$ & 0.397 & 0.274 & 41\\
$r=2$  & 0.210 & 0.203 &  & $r=2$ & 0.274 & 0.271 &  \\
\cmidrule(r){1-4}
\cmidrule(r){5-8}
\multicolumn{1}{l}{Model (2b)} & \multicolumn{1}{c}{$r=1$} & \multicolumn{1}{c}{$r=2$} & \multicolumn{1}{c}{$\Delta$} & \multicolumn{1}{l}{Model (2b)} & \multicolumn{1}{c}{$r=1$} & \multicolumn{1}{c}{$r=2$}  & \multicolumn{1}{c}{$\Delta$} \\
\cmidrule(r){1-4}
\cmidrule(r){5-8}
$r=1$ & 0.271 & 0.209  & 3 & $r=1$ & 0.358 & 0.259 & 32 \\
$r=2$& 0.209 & 0.203 &   & $r=2$ & 0.259 & 0.270 &  \\
\bottomrule
\end{tabular}
  \caption{Relative root mean squared error (RRMSE) of estimators of $\pi_{rs}$ for data generated from a Gumbel max-factor model (left) and a Normal sum-factor model (right) with parameter values as in Table~\ref{true}. The methods are compared using $\Delta$, the increase of RRMSE in percent with respect to the best method, which has $\Delta = 0$.}
  \label{simstudy2}
\end{table}

\clearpage

\renewcommand\refname{REFERENCES} 
\bibliographystyle{chicago} 
\bibliography{libCR}

\end{document}